\theoremstyle{plain}
\newtheorem{theorem}{Theorem}
\newtheorem{lemma}[theorem]{Lemma}
\theoremstyle{remark}
\newcommand\R{{\ensuremath {\mathbb R} }}
\newcommand\1{{\ensuremath {\mathds 1} }}
\newcommand\nn{\nonumber}
\renewcommand\phi{\varphi}
\newcommand{\gH}{\mathfrak{H}}
\newcommand{\wto}{\rightharpoonup}
\newcommand{\cE}{\mathcal{E}}
\newcommand{\eps}{\epsilon}
\renewcommand{\epsilon}{\varepsilon}
\newcommand{\norm}[1]{ \left| \! \left| #1 \right| \! \right| }
\DeclareMathOperator{\tr}{{\rm Tr}}
\DeclareMathOperator{\Tr}{{\rm Tr}}
\renewcommand{\ge}{\geqslant}
\renewcommand{\le}{\leqslant}
\renewcommand{\geq}{\geqslant}
\renewcommand{\leq}{\leqslant}
\renewcommand{\hat}{\widehat}
\title[2D focusing many-bosons systems]{A note on 2D focusing many-boson systems}
\author[M. Lewin]{Mathieu LEWIN}
\address{CNRS \& Universit\'e Paris-Dauphine, CEREMADE (UMR 7534), Place de Lattre de Tassigny, F-75775 PARIS Cedex 16, France} 
\email{mathieu.lewin@math.cnrs.fr}
\author[P.~T. Nam]{Phan Th\`anh NAM}
\address{IST Austria, Am Campus 1, 3400 Klosterneuburg, Austria} 
\email{pnam@ist.ac.at}
\author[N. Rougerie]{Nicolas ROUGERIE}
\address{Universit\'e Grenoble 1 \& CNRS,  LPMMC (UMR 5493), B.P. 166, F-38042 Grenoble, France}
\email{nicolas.rougerie@grenoble.cnrs.fr}
\begin{document}
\date{September 2015}

\begin{abstract} 
We consider a 2D quantum system of $N$ bosons in a trapping potential $|x|^s$, interacting via a pair potential of the form $N^{2\beta-1} w(N^\beta x)$. We show that for all $0<\beta<(s+1)/(s+2)$, the leading order behavior of ground states of the many-body system is described in the large $N$ limit by the corresponding cubic nonlinear Schr\"odinger energy functional. Our result covers the focusing case ($w<0$) where even the stability of the many-body system is not obvious. This answers an open question mentioned by X.~Chen and J.~Holmer for harmonic traps ($s=2$). Together with the BBGKY hierarchy approach used by these authors, our result implies the convergence of the many-body quantum dynamics to the focusing NLS equation with harmonic trap for all $0<\beta <3/4$. 
\end{abstract}

\maketitle

\setcounter{tocdepth}{2}
\tableofcontents

\section{Introduction}

Since the experimental realization of Bose-Einstein condensation (BEC) in dilute trapped Bose gases in 1995 \cite{Wieman-Cornell-95, Ketterle-95}, it has been an ongoing challenge in mathematical physics to derive the phenomenon from the first principles of quantum mechanics (see~\cite{BenPorSch-15,LieSeiSolYng-05,Rougerie-LMU} and references therein). 
The nature of the interaction between particles plays an essential role. In particular, singular and/or attractive potentials complicate the analysis dramatically. 
\medskip

In the present paper, we are interested in the derivation of the minimization problem for the 2D nonlinear Sch\"odinger (NLS) energy functional
\begin{equation}\label{eq:NLs func}
\cE_{\rm NLS}(u)=\int_{\R^2} \Big(  |(i\nabla +A(x)) u|^2 + V(x)|u(x)|^2 + \frac{a}{2} |u(x)|^4 \Big)dx 
\end{equation}
subject to the mass constraint
\begin{equation}\label{eq:mass cons}
\int_{\R ^2} |u| ^2 = 1.
\end{equation}
We will show that this NLS functional arises as an effective model for large dilute 2D bosonic systems, as a consequence of the occurence of BEC in the ground states. We shall be more specifically concerned with the focusing (or attractive) case, $a\leq 0$. 

Here $V$ is an external potential which serves to trap the system and $A$ is a vector potential corresponding to a magnetic field (or the effective influence of a rotation). We assume that
\begin{align} \label{eq:assumption-V}
V \in L^1_{\rm loc}(\R^2,\R), \quad A \in L^2_{\rm loc}(\R^2,\R^2)\quad \text{and} \quad V(x)\ge C^{-1} (|A(x)|^2 +|x|^s) - C
\end{align}
for a fixed parameter $s>0$ (we always denote by $C$ a generic positive constant whose value alters from line to line). The case $s=2$ corresponds to the harmonic trap which is most often used in laboratory experiments. 

We will assume that $a>-a^*$ where $a^*>0$ is the critical interaction strength for the existence of a ground state for the focusing NLS functional \cite{Weinstein-83,Zhang-00,GuoSei-14,Maeda-10}. In fact, $a^*$ is the optimal constant of the Gagliardo-Nirenberg inequality:
\begin{align} \label{eq:GN}
 \left(\int_{\R^2} |\nabla u|^2 \right) \left(\int_{\R^2} |u|^2 \right) \ge \frac{a^*}{2} \int_{\R^2} |u|^4.
\end{align}
Equivalently,  
$$a ^* = \norm{Q}_{L ^2 (\R ^2)} ^2, $$ 
where $Q\in H ^1 (\R ^2)$ is the unique (up to translations) positive radial solution of 
\begin{equation}\label{eq:Q-GN}
-\Delta Q + Q - Q ^3 = 0 \mbox{ in } \R ^2.
\end{equation}

The linear many-body model for $N$ identical bosons we start from is described by the Hamiltonian
\begin{equation}\label{eq:HN}
H_N = \sum_{j=1} ^N  \left( \left(i\nabla_{j} +A (x_j) \right) ^2 + V(x_j) \right) + \frac{1}{N-1} \sum_{1\leq i<j \leq N}w_N(x_i-x_j)
\end{equation}
acting on $\gH^N = \bigotimes_{\rm sym}^N L^2(\R^2)$, the Hilbert space of square-integrable symmetric functions. The two-body interaction is chosen of the form   
\begin{align} \label{eq:assumption-wN}
w_N(x)=N^{2\beta} w(N^\beta x)
\end{align}
for a fixed parameter $\beta>0$ and a fixed function $w$ satisfying
\begin{align}
\label{eq:assumption-w1} 
w, \hat w  \in L^1(\R^2,\R),  \quad  w(x)=w(-x) \quad \text{and}\quad \int_{\R ^2} w =a.
\end{align}
The coupling constant $1/(N-1)$ ensures that the total kinetic and interaction energies are comparable, so that we can expect a nontrivial effective theory in the limit $N\to \infty$. 
\medskip

Roughly speaking, BEC occurs when almost all particles live in a common quantum state, that is, in terms of wave functions, 
$$ \Psi(x_1,...,x_N) \approx u^{\otimes N}(x_1,...,x_N):=u(x_1)u(x_2)...u(x_N)$$
in an appropriate sense. By simply taking the trial wave functions $u^{\otimes N}$, we obtain the Hartree energy functional
\begin{align}\label{eq:Hartree func} 
\cE_{{\rm H},N}(u)&=\frac{\langle u^{\otimes N}, H_N u^{\otimes N}\rangle }{N} \nonumber\\
&= \int_{\R^2} \Big(  |(i\nabla u(x) + A(x)u(x)|^2 + V(x)|u(x)|^2 + \frac{1}{2} |u(x)|^2 (w_N*|u|^2)(x) \Big) d x.
\end{align}
The infimum of the latter, under the mass constraint $\int |u|^2=1$, is thus an upper bound to the many-body ground state energy per particle. When $N\to \infty$, since
\begin{align} \label{eq:wN-wto-delta}
w_N \wto \left(\int_{\R ^2} w\right) \delta_0 = a \delta_0,
\end{align}
the Hartree functional \eqref{eq:Hartree func} {\em formally} boils down to the NLS functional \eqref{eq:NLs func}. On the other hand, the Hartree functional is stable in the limit $N\to \infty$ only if 
\begin{equation}
\inf_{u\in H^1(\R^2)}\left(\frac{\displaystyle\iint_{\R^2\times \R^2}|u(x)|^2|u(y)|^2w(x-y)\,dx\,dy}{2\norm{u}_{L^2(\R^2)}^2\norm{\nabla u}_{L^2(\R^2)}^2}\right) \ge -1.
\label{eq:Hartree stable non strict}
\end{equation}
In fact, if \eqref{eq:Hartree stable non strict} fails to hold, then the ground state energy of the Hartree functional converges to $-\infty$ as $N\to \infty$, see \cite[Prop. 2.3]{LewNamRou-14c}. Hence, Condition \eqref{eq:Hartree stable non strict} is {\em necessary} for the many-body Hamiltonian to satisfy stability of the second kind: 
\begin{equation}
\label{eq:HN>-CN} H_N \ge -C N.
\end{equation}
That the one-body stability condition~\eqref{eq:Hartree stable non strict} is also \emph{sufficient} to ensure~ the many-body stability \eqref{eq:HN>-CN} is highly nontrivial and it is one of the main concerns of the present paper. As in \cite{LewNamRou-14c}, we will actually assume the strict stability
\begin{align} \label{eq:assumption-w2}
\inf_{u\in H^1(\R^2)}\left(\frac{\displaystyle\iint_{\R^2\times \R ^2}|u(x)|^2|u(y)|^2w(x-y)\,dx\,dy}{2\norm{u}_{L^2(\R^2)}^2\norm{\nabla u}_{L^2(\R^2)}^2}\right)>-1
\end{align} 
which plays the same role as the assumption $a>-a^*$ in the NLS case. Note that \eqref{eq:Hartree stable non strict} implies that $\int w \ge -a^*$, and \eqref{eq:Hartree stable non strict} holds if $\int_{\R^2}|w_-|<a^*$.
\medskip

The goal of the present paper is to improve on the results of~\cite{LewNamRou-14} where we showed in particular that  the many-body ground states converge (in terms of reduced density matrices) to those of the NLS functional~\eqref{eq:NLs func} when $N\to \infty$, provided 
$$0 < \beta < \beta_0 (s) := \frac{s}{4(s+1)}.$$
Here we extend this range to 
\begin{equation}\label{eq:beta condition}
\boxed{0 < \beta < \beta_1 (s):= \frac{s+1}{s+2}.}
\end{equation}
Note the qualitative improvement: while $\beta_0(s) <1/2$, we have $\beta_1 (s) > 1/2$. This means that we now allow the range of the interaction to be much smaller than the typical distance between particles, of order $N^{-1/2}$. We can thus treat a dilute limit where interactions are rare but strong, as opposed to the previous result which was limited to the mean-field case with many weak interactions.  
%

\medskip

\noindent{\bf Acknowledgements.} The authors acknowledge financial support from the European Union's Seventh Framework Programme (ERC Grant MNIQS no. 258023 and REA Grant no. 291734) and the ANR (Mathostaq Project ID ANR-13-JS01-0005-01).

\section{Main results}

\subsection{Statements}

We will prove the convergence of the ground state energy per particle of $H_N$ to that of the NLS functional \eqref{eq:NLs func}. These are denoted respectively by
\begin{equation}\label{eq:GS ener many}
e_N := N^{-1}\inf_{\Psi\in \gH^N, \|\Psi\|=1} \langle \Psi_N, H_N \Psi_N \rangle \quad \text{and}\quad e_{\rm NLS} := \inf_{\|u\|_{L^2}=1} \cE_{\rm NLS} (u).
\end{equation}
The convergence of ground states is formulated using $k$-particles reduced density matrices, defined for any $\Psi \in\gH^N$ by a partial trace $$\gamma_{\Psi}^{(k)}:= \Tr_{k+1\to N} |\Psi \rangle \langle \Psi|.$$
Equivalently, $\gamma_\Psi^{(k)}$ is the trace class operator on $\gH^k$ with kernel 
$$
\gamma_\Psi^{(k)} (x_1,...,x_k; y_1,...,y_k)= \int_{\R^{2(N-k)}} \overline{\Psi(x_1,...,x_k,Z)}  \Psi(y_1,...,y_k,Z) dZ.
$$  
Our main result is the following

\begin{theorem}[\textbf{Convergence to NLS theory}]\label{thm:cv-nls}\mbox{}\\
Assume that $V$, $A$, $w$ satisfy \eqref{eq:assumption-V}, \eqref{eq:assumption-w1} and \eqref{eq:assumption-w2}. Then, for every $0<\beta<(s+1)/(s+2)$, 
\begin{align} \label{eq:thm-cv-GSE}
\boxed{\lim_{N\to \infty} e_N = e_{\rm NLS} >-\infty.} 
\end{align}
Moreover, for any ground state $\Psi_N$ of $H_N$, there exists a Borel probability measure $\mu$ supported on the ground states of $\cE_{\rm NLS}(u)$ such that, along a subsequence, 
\begin{align} \label{eq:thm-cv-DM}
\boxed{\lim_{N \to \infty}\Tr \left| \gamma_{\Psi _{N}}^{(k)} - \int |u^{\otimes k} \rangle \langle u^{\otimes k}| d\mu(u) \right| =0,\quad \forall k\in \mathbb{N}.}
\end{align}
If $\cE_{\rm NLS}(u)$ has a unique minimizer $u_0$ (up to a phase), then for the whole sequence
\begin{align} \label{eq:thm-BEC}
\lim_{N \to \infty}\Tr \left| \gamma_{\Psi_{N} }^{(k)} - |u_0^{\otimes k} \rangle \langle u_0^{\otimes k}|  \right| =0,\quad \forall k\in \mathbb{N}.
\end{align}
\end{theorem}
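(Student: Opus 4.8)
The plan is to prove the energy convergence \eqref{eq:thm-cv-GSE} and extract the de Finetti measure along the way, so that \eqref{eq:thm-cv-DM}--\eqref{eq:thm-BEC} come out as byproducts. The upper bound $e_N\le e_{\rm NLS}+o(1)$ is the routine direction: testing $H_N$ on $u^{\otimes N}$ gives $e_N\le\cE_{{\rm H},N}(u)$, and since $w_N\wto a\delta_0$ while a near-minimizer $u$ of $\cE_{\rm NLS}$ may be chosen in $H^1\cap L^4$, one checks $\cE_{{\rm H},N}(u)\to\cE_{\rm NLS}(u)$ and optimizes over $u$. Writing $h:=(i\nabla+A)^2+V$ and
\[ N^{-1}\langle\Psi_N,H_N\Psi_N\rangle=\Tr\big[h\,\gamma^{(1)}_{\Psi_N}\big]+\tfrac12\Tr\big[w_N\,\gamma^{(2)}_{\Psi_N}\big], \]
all the difficulty is concentrated in the interaction term, both for the lower bound and for the stability $e_N\ge-C$.

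The central estimate, which I would isolate as a lemma, is a \emph{relative form bound} on the interaction,
\[ \tfrac12\,\Tr\big[w_N\,\gamma^{(2)}_{\Psi_N}\big]\ \ge\ -(1-\theta)\,\Tr\big[h\,\gamma^{(1)}_{\Psi_N}\big]-C, \]
for some fixed $\theta\in(0,1)$; equivalently $H_N\ge\theta\sum_j h_j-CN$. This single bound does several jobs at once. It yields stability of the second kind, and---crucially---it breaks the circularity in the a priori estimates: combined with the upper bound it forces $\Tr[h\,\gamma^{(1)}_{\Psi_N}]\le C$ uniformly in $N$. Since \eqref{eq:assumption-V} makes $h$ coercive with compact resolvent, the family $(\gamma^{(k)}_{\Psi_N})_N$ is then, along a subsequence, convergent in trace norm, and the strong quantum de Finetti theorem produces a Borel probability measure $\mu$ on the unit sphere of $L^2(\R^2)$ with $\gamma^{(k)}_{\Psi_N}\to\int|u^{\otimes k}\rangle\langle u^{\otimes k}|\,d\mu(u)$.

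To pass to the limit, the one-body term is handled by weak lower semicontinuity, $\liminf_N\Tr[h\,\gamma^{(1)}_{\Psi_N}]\ge\int\langle u,hu\rangle\,d\mu$. For the interaction I would avoid substituting $w_N\wto a\delta_0$ directly, since $w_N$ concentrates on scale $N^{-\beta}$ whereas $\gamma^{(2)}_{\Psi_N}$ is only weakly controlled. Instead, using $\int w_N=a$, I would compare with the contact term via
\[ \Tr\big[w_N\,\gamma^{(2)}_{\Psi_N}\big]-a\!\int\rho^{(2)}_{\Psi_N}(x,x)\,dx=\iint w_N(x-y)\big(\rho^{(2)}_{\Psi_N}(x,y)-\rho^{(2)}_{\Psi_N}(x,x)\big)\,dx\,dy, \]
and bound the right-hand side by the near-diagonal regularity of the two-particle density. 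That increment is controlled by $\nabla_2\rho^{(2)}_{\Psi_N}$ integrated over the shift $|x-y|\lesssim N^{-\beta}$, and the requisite near-diagonal estimates follow from the kinetic control $\Tr[(-\Delta_1-\Delta_2)\gamma^{(2)}_{\Psi_N}]\le C$ together with the integrability of the densities coming from the confinement, $\int|x|^s\rho^{(1)}_{\Psi_N}\le C$ and $\int|\nabla\sqrt{\rho^{(1)}_{\Psi_N}}|^2\le C$. It is precisely this balance---the concentration scale $N^{-\beta}$ against the regularity afforded by the kinetic energy and the trap exponent $s$---that produces the threshold $\beta<\beta_1(s)=(s+1)/(s+2)$. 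Once the interaction is replaced by its contact form, the energy separates along $\mu$, giving $\liminf_N e_N\ge\int\cE_{\rm NLS}(u)\,d\mu(u)\ge e_{\rm NLS}$. Matching with the upper bound proves \eqref{eq:thm-cv-GSE} and forces $\mu$ to charge only minimizers of $\cE_{\rm NLS}$, which is \eqref{eq:thm-cv-DM}; if the minimizer $u_0$ is unique then $\mu=\delta_{u_0}$ up to a phase and \eqref{eq:thm-BEC} holds for the whole sequence.

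The main obstacle is the relative form bound itself, where the focusing sign of $w$ must be tamed by the kinetic energy at the many-body level. I would prove it by localizing the kinetic energy and working in the low-energy sector $P=\1(h\le\Lambda)$, where $P\gH$ is finite-dimensional: a quantitative de Finetti estimate there reduces the projected interaction to the one-body ratio in \eqref{eq:assumption-w2}, which being $>-1$ leaves a positive fraction $\theta$ of the kinetic energy, while the complementary high-energy part is absorbed by $\sum_j h_j$ and the short-distance analysis above shows the smearing of $w_N$ costs only controllable errors. Carrying this through uniformly in $N$ for the full range \eqref{eq:beta condition}, rather than for the smaller range $\beta<\beta_0(s)$ of \cite{LewNamRou-14}, is exactly the point where the sharper near-diagonal density estimates are indispensable.
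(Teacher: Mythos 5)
Your overall architecture is sound and partly matches the paper's: the upper bound via $u^{\otimes N}$, the identification of a stability bound $H_N\ge \eps\sum_j h_j - CN$ (equivalently $|e_{N,\eps}|\le C$) as the crux, and the finite-dimensional localization $P=\1(h\le \Lambda)$ plus quantitative de Finetti to reduce the projected interaction to the one-body stability condition \eqref{eq:assumption-w2} are all present in the paper (Lemmas~\ref{thm:deF-measure} and~\ref{lem:GSE-2}). But your proposed proof of the central form bound has a genuine gap: the high-energy localization error cannot be ``absorbed by $\sum_j h_j$.'' In 2D, uniformly in $N$, the interaction $w_N$ is only form-bounded by the \emph{mixed second moment}, $|w_N(x-y)|\le C_\delta (h_xh_y)^{1/2+\delta}$ (this is \eqref{eq:W<=Delta-Delta}; the exponent $1/2+\delta$ on \emph{each} factor is forced because the 2D delta is not form-bounded by $h_x+h_y$), while relative to a single $h_x$ the bound degrades like $\|w_N\|_{L^2}\sim N^\beta$. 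Consequently the error left after projecting onto $P\gH$ is controlled by $\Tr\big(h\otimes h\,\gamma_{\Psi_N}^{(2)}\big)^{1/2+\delta}$, a quantity that is \emph{not} dominated by any multiple of the first moment $\Tr\big(h\gamma_{\Psi_N}^{(1)}\big)$, so no fixed choice of $\Lambda$ closes the argument for a general state. For the same reason your near-diagonal replacement of $w_N$ by $a\delta_0$ at the level of $\rho^{(2)}$ fails under the bound $\Tr\big[(-\Delta_1-\Delta_2)\gamma_{\Psi_N}^{(2)}\big]\le C$ alone: the diagonal restriction of $\rho^{(2)}$ is a codimension-2 trace in $\R^4$, which first-moment kinetic control does not define, let alone quantify; the paper instead performs the $w_N\to a\delta_0$ step only on Hartree energies of $H^1$ functions (Lemma~\ref{lem:Hartree-NLS}), where it is elementary.

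The paper supplies exactly the two ingredients missing from your plan. First, a moments estimate valid \emph{for ground states only} (Lemma~\ref{lem:GSE-3}): multiplying the eigenvalue equation by $\sum_j h_j$ in the spirit of Erd\"os--Schlein--Yau and using the operator bounds of Lemma~\ref{lem:interaction-operator} yields $\Tr\big(h\otimes h\,\gamma_{\Psi_N}^{(2)}\big)\le C\big((1+|e_{N,\eps}|)/\eps\big)^2$ --- still circular, since $|e_{N,\eps}|$ is the unknown. Second, a bootstrap that breaks the circularity: the crude bound $e_N\ge e_{\rm NLS}^0-o(1)-CN^{2\beta-1}$ (Lemma~\ref{lem:GSE-1}) seeds an induction $I_\eta:\ |e_{N,\eps}|\lesssim 1+N^\eta$, and feeding $I_\eta$ into the de Finetti lower bound with an $N$-dependent cutoff $L=N^\tau$, optimized to $\tau=s(5\eta+4)/(9s+8)$, improves the exponent by a fixed $c>0$ at each step precisely when $\beta<(s+1)/(s+2)$. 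This is also where the threshold $\beta_1(s)$ actually comes from --- the arithmetic of the bootstrap (starting exponent $2\beta-1$ versus dimension growth $d\le CL^{1+2/s}$ and the gain $L^{-1/4+\delta/2}$) --- not from a balance between the concentration scale $N^{-\beta}$ and near-diagonal regularity of $\rho^{(2)}$ as you suggest. Your compactness/de Finetti argument for \eqref{eq:thm-cv-DM}--\eqref{eq:thm-BEC} once stability and energy convergence are known is fine and is essentially what the paper does by citing \cite{LewNamRou-14c}.
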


Note that if $A=0$ and $V$ is radial, one can prove the uniqueness for the NLS ground state by well-known arguments, reviewed for instance in~\cite{Frank-13}. Uniqueness can certainly fail when $A\neq 0$ (due to the occurence of quantized vortices~\cite{Sei-02}), or when $a<0$ and $V$ has several isolated minima~\cite{AshFroGraSchTro-02,GuoSei-14}.


 \subsection{Focusing quantum dynamics}  Most recently, Chen and Holmer~\cite{CheHol-15} considered the derivation of the time-dependent 2D focusing NLS in a harmonic trap $V(x) = |x| ^2$ from many-body quantum dynamics. They proved that for all $0<\beta<1/6$, if the initial state $\Psi_N(0)$ condensates on $u(0)$ (in the sense of density matrices as in \eqref{eq:thm-BEC}), then for every time $t>0$, the evolved state $\Psi_N(t)=e^{-it\widetilde{H}_N}\Psi_N(0)$ with 
$$
\widetilde{H}_N = \sum_{j=1}^N  \left( -\Delta_{x_j} + |x|^2 \right) + \frac{1}{N-1} \sum_{1\leq i<j \leq N} N^{2\beta}w(N^\beta(x_i-x_j))
$$
condensates on the solution $u(t)$ to the time-dependent NLS equation 
$$
i \partial_t u(t) = (-\Delta+|x|^2+ a |u(t)|^2) u(t), \quad u_{|t=0}=u(0).
$$
Their approach is based on the BBGKY hierarchy method and the stability of the second kind~\eqref{eq:HN>-CN}, which has been established in \cite{LewNamRou-14c} for $0<\beta<1/6$. As discussed in \cite[Section~2.3]{CheHol-15}, their method actually allows to treat any $0<\beta < 3/4$, provided that the stability holds for this larger range of $\beta$, which they left as an open question. Theorem~\ref{thm:cv-nls} thus provides the needed stability estimate to extend the main result in ~\cite{CheHol-15} to any~$0<\beta < 3/4$.

Note that if $\beta<1/2$, the next order correction to the 2D focusing quantum dynamics can be obtained using the Bogoliubov approach~\cite{LewNamSch-14,NamNap-15}  (see \cite{BocCenSch-15} for the defocusing case).

\subsection{Strategy of proof} We shall compare the many-body ground state energy per particle $e_N$ to that of the Hartree functional~\eqref{eq:Hartree func} 
$$
e_{{\rm H},N} := \inf_{\|u\|_{L^2}=1} \cE_{{\rm H},N}(u)
$$
and then use that (see Appendix~\ref{sec:app})
$$\lim_{N\to \infty} e_{{\rm H},N}=e_{\rm NLS}.$$ 
The upper bound $e_N \le e_{{\rm H},N}$ can be obtained using trial states $u^{\otimes N}$, and the difficult part is the matching lower bound. 

The first ingredient of our proof of Theorem \ref{thm:cv-nls} is the following: 

\begin{lemma}[\textbf{First lower bound on the ground state energy}] \label{lem:GSE-1}\mbox{}\\
For any $\beta \geq 0$ we have, in the limit $N\to \infty$,
\begin{align}\label{eq:first low bound}
e_N &\ge \inf_{\|u\|_{L^2}=1} \int_{\R^2} \Big( |\nabla u(x)|^2 + V |u(x)|^2 + \frac{1}{2} |u|^2(w_N*|u|^2) \Big) dx - CN^{2\beta-1}\nonumber\\
&\geq e_{\rm NLS} ^0 - o(1) - C N ^{2\beta -1}.
\end{align}
Here $e_{\rm NLS} ^0$ denotes the NLS energy with $A\equiv 0$.
\end{lemma}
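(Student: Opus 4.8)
The plan is to reduce the many-body energy to a one-body (Hartree-type) functional of the one-particle density, paying a price only through the two-body correlations and the self-interaction, both of which turn out to scale like $N^{2\beta-1}$ per particle.

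First I would use the diamagnetic inequality to dispose of the magnetic potential. For any $\Psi\in\gH^N$ one has, coordinate by coordinate, $|(i\nabla_j+A(x_j))\Psi|\ge|\nabla_j|\Psi||$ almost everywhere, whence $\langle\Psi,H_N\Psi\rangle\ge\langle|\Psi|,\tilde H_N|\Psi|\rangle$, where $\tilde H_N$ is obtained from $H_N$ by setting $A\equiv0$ (the potential and interaction terms are multiplication operators and only see $|\Psi|^2$). This is exactly why the $A$-free energy $e_{\rm NLS}^0$ appears, and it reduces the problem to non-negative wavefunctions with $A=0$. Next I would rewrite the energy through reduced density matrices. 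Writing $\rho(x)=\gamma_\Psi^{(1)}(x,x)$ (normalised to $1$), the one-body part equals $N\,\Tr[(-\Delta+V)\gamma_\Psi^{(1)}]$, which the Hoffmann--Ostenhof inequality bounds below by $N\int_{\R^2}(|\nabla\sqrt\rho|^2+V\rho)$; with $u=\sqrt\rho$ this is exactly $N$ times the first two terms of $\mathcal G_N(u):=\int_{\R^2}(|\nabla u|^2+V|u|^2+\tfrac12|u|^2(w_N*|u|^2))$. The interaction part equals $\tfrac N2\Tr[w_N\gamma_\Psi^{(2)}]$, and the entire difficulty is to show that it is bounded below by $\tfrac N2\int_{\R^2}|u|^2(w_N*|u|^2)$ up to an error $CN^{2\beta}$.

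For this last point I would invoke a quantitative quantum de Finetti theorem to replace $\gamma_\Psi^{(2)}$ by a convex combination $\int|u^{\otimes2}\rangle\langle u^{\otimes2}|\,d\mu(u)$ of product states, the trapping \eqref{eq:assumption-V} providing the compactness needed to localise onto a finite-dimensional one-body sector. This turns $\tfrac12\Tr[w_N\gamma_\Psi^{(2)}]$ into an average of Hartree interactions, each controlled from below through the scale-invariant form of the stability assumption \eqref{eq:assumption-w2}: since the scaling $u(x)\mapsto N^{\beta}u(N^{\beta}x)$ leaves $\norm{u}_{L^2}$ invariant and sends the ratio for $w_N$ to the ratio for $w$, the infimum in \eqref{eq:assumption-w2} is the same for $w$ and for $w_N$, so there is a fixed $c<1$ with $\tfrac12\int_{\R^2}|u|^2(w_N*|u|^2)\ge -c\,\norm{\nabla u}^2$ whenever $\norm{u}_{L^2}=1$. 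The main term then reconstructs $\inf_{\norm{u}_{L^2}=1}\mathcal G_N(u)$, while the de Finetti replacement costs the product of the trace-norm distance, of order $N^{-1}$, and the operator norm $\norm{w_N}\sim N^{2\beta}$ of the interaction, that is $N^{2\beta-1}$ per particle after dividing $\tfrac N2\Tr[w_N\gamma_\Psi^{(2)}]$ by $N$. Assembling the kinetic, potential and interaction estimates and minimising over $u$ gives the first inequality in \eqref{eq:first low bound}; the second follows from $\inf_{\norm{u}_{L^2}=1}\mathcal G_N(u)\to e_{\rm NLS}^0$, established in Appendix~\ref{sec:app}.

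The main obstacle is precisely this control of the two-body term for a focusing, concentrating potential. Because $\norm{w_N}$ diverges like $N^{2\beta}$, the de Finetti approximation must be made quantitative with an $N^{-1}$ rate; this is what forces the $N^{2\beta-1}$ error and explains why the bound is informative only for $\beta<1/2$ and degenerates into a mere a priori estimate beyond. In position-space terms the danger is that many pairs could bind at scale $N^{-\beta}$, producing a far more negative energy than the naive two-body bottom; the kinetic energy, via the stability condition, must be shown to forbid this, and it is the interplay between the kinetic cost of localisation and the strength of the attraction that produces the gain of one power of $N$ over the crude two-body estimate.
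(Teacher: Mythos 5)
Your treatment of the one-body part is correct and matches the cited argument: the diamagnetic inequality removes $A$ (which is exactly why $e_{\rm NLS}^0$ appears), the Hoffmann--Ostenhof inequality gives $\Tr[(-\Delta+V)\gamma_{\Psi}^{(1)}]\ge \int_{\R^2}(|\nabla\sqrt{\rho}|^2+V\rho)$ with $u=\sqrt{\rho}$, and the second inequality in \eqref{eq:first low bound} is indeed Appendix~\ref{sec:app} (your scale-invariance observation about \eqref{eq:assumption-w2} under $u\mapsto N^{\beta}u(N^{\beta}\cdot)$ is also correct). The genuine gap is your treatment of the two-body term via the quantitative quantum de Finetti theorem. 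First, it is circular within the logic of this paper: the quantitative de Finetti bound (Lemma~\ref{thm:deF-measure}) only applies after localizing to a \emph{finite-dimensional} spectral subspace $P\gH$, and controlling the localization remainder $\gamma_{\Psi_N}^{(2)}-P^{\otimes 2}\gamma_{\Psi_N}^{(2)}P^{\otimes 2}$ requires a priori moment bounds on $\Tr(h\gamma_{\Psi_N}^{(1)})$ and $\Tr(h\otimes h\,\gamma_{\Psi_N}^{(2)})$. In the focusing case those bounds (Lemma~\ref{lem:GSE-3}) depend on $|e_{N,\eps}|$, which is not yet known to be bounded; the entire bootstrap of Section~\ref{sec:GSE-4} exists precisely because the de Finetti route cannot run before one has the a-priori-free estimate of Lemma~\ref{lem:GSE-1} as a seed. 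Using de Finetti to prove Lemma~\ref{lem:GSE-1} puts the cart before the horse.

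Second, your error accounting does not produce $CN^{2\beta-1}$. The de Finetti trace-norm error is $8d/N$, not $C/N$: to make the localization error vanish you must take the cutoff $L\to\infty$, hence $d\sim L^{1+2/s}\to\infty$ by \eqref{eq:estimate-d-dimP}, and the resulting error is $d/N$ times the norm of the (compressed) interaction --- note also that pairing with $\|w_N\|_{L^\infty}\sim N^{2\beta}$ is wasteful; the paper pairs with $P^{\otimes2}w_NP^{\otimes2}$, whose norm is $O(L^{1+2\delta})$ by \eqref{eq:W<=Delta-Delta} since $\|w_N\|_{L^1}$ is $N$-independent. Either way one does not recover the clean $N^{2\beta-1}$. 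The proof the paper actually invokes (\cite[Section 3]{Lewin-ICMP}) is a direct Fourier-space argument needing no localization and no a priori bounds: since $\hat w\in L^1$, one writes the interaction in terms of $\big|\sum_j e^{ik\cdot x_j}\big|^2$ and completes the square against the one-body density $\rho_\Psi$, the self-energy (Onsager-type) correction being of order $w_N(0)\le C\|\hat w_N\|_{L^1}\sim N^{2\beta}$ per pair sum, i.e.\ exactly $CN^{2\beta-1}$ per particle after the $\tfrac1{N(N-1)}$ normalization; combined with Hoffmann--Ostenhof this yields the density functional in the first line of \eqref{eq:first low bound} for \emph{every} $\beta\ge0$ and every state. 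That structural difference --- a correlation-free Onsager bound versus a de Finetti approximation --- is what makes the lemma a legitimate starting point for the bootstrap, and it is the step your proposal is missing.
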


\begin{proof}
The first inequality is proved in~\cite[Section 3]{Lewin-ICMP}. The second follows from the analysis of the Hartree functional in Appendix~\ref{sec:app}. 
\end{proof}

When $\beta<1/2$ and $A\equiv 0$ (no magnetic field), Lemma~\ref{lem:GSE-1} implies immediately the convergence of the ground state energy \eqref{eq:thm-cv-GSE}. When either $\beta\ge 1/2$ or $A\not\equiv 0$, the proof of the convergence~\eqref{eq:thm-cv-GSE} is more involved. In particular, when $\beta>1/2$ and $w<0$, the stability of the second kind~\eqref{eq:HN>-CN} is not provided by~Lemma~\ref{lem:GSE-1}. 

The main novelty of the present paper is to obtain~\eqref{eq:HN>-CN} by a bootstrap procedure,  taking Lemma~\ref{lem:GSE-1} as a starting point. 
As in~\cite{LewNamRou-14c}, a major ingredient in our proof is a quantitative version of the quantum de Finetti theorem. 

\begin{lemma}[\textbf{Quantitative quantum de Finetti}]\label{thm:deF-measure}\mbox{}\\ 
Let $\Psi\in \gH^N=\bigotimes_{\rm sym}^N L^2(\R^2)$ and let $P$ be a finite-rank orthogonal projector with 
$$\dim (P\gH)=d<\infty.$$
There exists a positive Borel measure $d\mu_\Psi$ on the unit sphere $S P \gH$ such that 
\begin{equation}\label{eq:deF estimate}
\tr \left| \int_{SP\gH} |u^{\otimes 2} \rangle \langle u^{\otimes 2}| d\mu_\Psi (u) - P^{\otimes 2} \gamma_\Psi^{(2)} P ^{\otimes 2}\right| \leq  \frac{8d}{N}
\end{equation}
and 
\begin{align} \label{eq:int-muN}
\int_{SP\gH} d\mu_\Psi(u) \ge \big(\Tr(P\gamma_\Psi^{(1)}) \big)^2.
\end{align} 
\end{lemma}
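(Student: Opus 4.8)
The plan is to transport everything to the $d$-dimensional space $P\gH$ and to realize $d\mu_\Psi$ as a coherent-state (lower-symbol) measure. The basic tool is the resolution of the identity on each symmetric power: with $d_n:=\binom{n+d-1}{d-1}=\dim\bigotimes_{\rm sym}^n(P\gH)$ one has
\begin{equation*}
d_n\int_{SP\gH}|u^{\otimes n}\rangle\langle u^{\otimes n}|\,du=\1_{\bigotimes_{\rm sym}^n(P\gH)},
\end{equation*}
$du$ being the uniform probability measure on $SP\gH$. The naive candidate $d\mu_\Psi(u)=d_N\langle u^{\otimes N},|\Psi\rangle\langle\Psi|u^{\otimes N}\rangle\,du$ has total mass $\|P^{\otimes N}\Psi\|^2$ and already yields \eqref{eq:deF estimate}; however it discards the particles leaking into $Q\gH$ (with $Q=1-P$) and hence cannot give \eqref{eq:int-muN}, as one sees from a two-body state carrying one particle in $P\gH$ and one in $Q\gH$, for which $\|P^{\otimes N}\Psi\|^2=0$ while $(\Tr P\gamma_\Psi^{(1)})^2=1/4$. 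The construction must therefore record all particle-number sectors.

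First I would replace $\Psi$ by its geometric localization relative to $\gH=P\gH\oplus Q\gH$: a state $G=\bigoplus_{n=0}^N G_n$ on the truncated bosonic Fock space over $P\gH$, whose block $G_n\ge0$ on $\bigotimes_{\rm sym}^n(P\gH)$ has trace equal to the probability that exactly $n$ of the $N$ particles sit in $P\gH$. The algebraic input I would use is the identity $\gamma_G^{(k)}=P^{\otimes k}\gamma_\Psi^{(k)}P^{\otimes k}$ between the reduced density matrices of the Fock state $G$ and the projected density matrices of $\Psi$. To $G$ I would attach, by the coherent-state (grand-canonical) de Finetti construction, a probability measure $\widetilde\mu$ on the ball $\{z\in P\gH:\|z\|\le1\}$ whose moments reproduce $P^{\otimes k}\gamma_\Psi^{(k)}P^{\otimes k}$, and I would take $d\mu_\Psi$ to be the push-forward of $\|z\|^4\,d\widetilde\mu(z)$ under $z\mapsto z/\|z\|$. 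Since $|u^{\otimes2}\rangle\langle u^{\otimes2}|\,\|z\|^4=|z^{\otimes2}\rangle\langle z^{\otimes2}|$ for $u=z/\|z\|$, the second moment is preserved: $\int|u^{\otimes2}\rangle\langle u^{\otimes2}|\,d\mu_\Psi(u)=\int|z^{\otimes2}\rangle\langle z^{\otimes2}|\,d\widetilde\mu(z)$.

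The approximation \eqref{eq:deF estimate} would then follow from the quantitative finite-dimensional de Finetti estimate for $\widetilde\mu$: inserting the resolution of the identity at level $N$ and comparing $\int|z^{\otimes2}\rangle\langle z^{\otimes2}|\,d\widetilde\mu$ with $\gamma_G^{(2)}=P^{\otimes2}\gamma_\Psi^{(2)}P^{\otimes2}$, the error is governed by dimension ratios such as $d_N/d_{N+2}$ and by the defect of the map $\sigma\mapsto\Pi_{\rm sym}(\sigma\otimes\1)\Pi_{\rm sym}$ on the two-particle sector, which in a $d$-dimensional one-particle space is of size $O(d/N)$. Tracking the combinatorial constants for $k=2$ is what produces the explicit factor $8d/N$.

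The lower bound \eqref{eq:int-muN} is where the real work lies, and I expect it to be the main obstacle. The point of the above construction is that $\widetilde\mu$ is a genuine probability measure whose first moment reproduces $P\gamma_\Psi^{(1)}P$, so that $\int\|z\|^2\,d\widetilde\mu=\Tr P\gamma_\Psi^{(1)}$, while the total mass of $\mu_\Psi$ equals $\int\|z\|^4\,d\widetilde\mu$; bound \eqref{eq:int-muN} is then exactly the Cauchy--Schwarz inequality $\big(\int\|z\|^2\,d\widetilde\mu\big)^2\le\int\|z\|^4\,d\widetilde\mu\cdot\int d\widetilde\mu$ together with $\int d\widetilde\mu=1$. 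The delicacy is that this needs the first moment of $\widetilde\mu$ to equal $P\gamma_\Psi^{(1)}P$ \emph{exactly}, since an approximate identity would only give $(\Tr P\gamma_\Psi^{(1)})^2-O(d/N)$; equivalently, one must arrange that the $O(1/N)$ gap between $(\Tr P\gamma_\Psi^{(1)})^2$ and $\Tr[P^{\otimes2}\gamma_\Psi^{(2)}]=(\langle\cN_P^2\rangle-\langle\cN_P\rangle)/(N(N-1))$, with $\cN_P=\sum_{j}P_j$, is absorbed by the construction rather than by the error term. Reconciling this exact lower bound with the sharp $O(d/N)$ control of \eqref{eq:deF estimate}, while keeping $\mu_\Psi$ a positive measure on the sphere, is the crux.
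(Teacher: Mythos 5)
The paper does not prove this lemma from scratch: \eqref{eq:deF estimate} is quoted from \cite[Lemma 3.4]{LewNamRou-14c} and \eqref{eq:int-muN} from the proof of \cite[Lemma 3.8]{LewNamRou-14c}, and the construction there is precisely the one you are circling around: geometric localization $G=\bigoplus_{n=0}^N G_n$ relative to $P\gH\oplus Q\gH$, followed by the coherent-state (CKMR-type) measure $d_n\langle u^{\otimes n},G_n u^{\otimes n}\rangle\,du$ on each $n$-particle sector, summed with $n$-dependent scalar weights. So your overall route is the right one, but your write-up contains one false side claim and one genuine gap. The side claim: the naive measure $d_N\langle u^{\otimes N},|\Psi\rangle\langle\Psi|u^{\otimes N}\rangle\,du$ does \emph{not} ``already yield \eqref{eq:deF estimate}''; its second moment approximates the two-body matrix of the projected state $P^{\otimes N}|\Psi\rangle\langle\Psi|P^{\otimes N}$, not $P^{\otimes 2}\gamma_\Psi^{(2)}P^{\otimes 2}$. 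For $\Psi$ the symmetrization of $f^{\otimes(N-1)}\otimes g$ with $f\in P\gH$, $g\in Q\gH$, one has $P^{\otimes N}\Psi=0$, so the naive measure vanishes, while $\Tr\big(P^{\otimes2}\gamma_\Psi^{(2)}P^{\otimes2}\big)=(N-2)/N$; hence the naive candidate violates \eqref{eq:deF estimate} as badly as \eqref{eq:int-muN}.

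The genuine gap is that you never prove \eqref{eq:int-muN}: you end by declaring the reconciliation of the exact mass bound with the $O(d/N)$ moment error to be ``the crux'', and the exactness you ask for --- a probability measure $\widetilde\mu$ whose \emph{operator} first moment equals $P\gamma_\Psi^{(1)}P$ exactly --- is both unavailable and unnecessary; indeed demanding exact moments at $k=1$ and $k=2$ simultaneously is impossible, as your own two-body example shows (an exact second moment forces $\widetilde\mu=\delta_0$, contradicting the first moment). The resolution is that only \emph{scalar, radial} identities need to be exact, and localization supplies these for free. Take for $\widetilde\mu$ the sum over $n$ of the sector measures $d_n\langle u^{\otimes n},G_n u^{\otimes n}\rangle\,du$ pushed onto the sphere of radius $\sqrt{n/N}$; since each of these has mass exactly $\Tr G_n$ by the resolution of the identity, one gets the \emph{exact} identities $\int d\widetilde\mu=\sum_n\Tr G_n=1$ and $\int\|z\|^2\,d\widetilde\mu=\sum_n(n/N)\Tr G_n=\Tr\big(P\gamma_\Psi^{(1)}\big)$, regardless of any operator-moment error. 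Your radially weighted push-forward then has mass
\begin{equation*}
\int_{SP\gH} d\mu_\Psi=\int\|z\|^4\,d\widetilde\mu=\sum_n\Big(\frac nN\Big)^2\Tr G_n\ \ge\ \Big(\Tr P\gamma_\Psi^{(1)}\Big)^2
\end{equation*}
by Cauchy--Schwarz applied to the distribution $(\Tr G_n)_n$ of the number of particles in $P\gH$: this is \eqref{eq:int-muN}, exactly. The price is that the second moment of $\mu_\Psi$ carries weights $(n/N)^2$ rather than the $\frac{n(n-1)}{N(N-1)}$ of the exact decomposition $P^{\otimes2}\gamma_\Psi^{(2)}P^{\otimes2}=\sum_{n\ge2}\frac{n(n-1)}{N(N-1)}\gamma^{(2)}_{G_n}$; this mismatch contributes $\sum_n\frac{n(N-n)}{N^2(N-1)}\Tr G_n\le\frac{1}{4(N-1)}$ in trace norm, which together with the per-sector CKMR error $(n/N)^2\cdot 4d/n\le 4d/N$ (and an $O(N^{-2})$ term from the sectors $n\le1$) fits inside the stated $8d/N$. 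Note that keeping the ``natural'' weights $\frac{n(n-1)}{N(N-1)}$, i.e.\ total mass $\Tr\big(P^{\otimes2}\gamma_\Psi^{(2)}P^{\otimes2}\big)$, would make \eqref{eq:int-muN} \emph{false}: a state with exactly $N/2$ particles in $P\gH$ gives mass $\frac{N-2}{4(N-1)}<\frac14=\big(\Tr P\gamma_\Psi^{(1)}\big)^2$. So, contrary to your closing guess, the $O(1/N)$ gap \emph{is} absorbed by the error term of \eqref{eq:deF estimate} (where $8d/N$ dominates it), not by the construction; what the construction guarantees exactly are the scalar localization identities, and that is all \eqref{eq:int-muN} needs.
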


%
\begin{proof} The first inequality \eqref{eq:deF estimate} is contained in ~\cite[Lemma 3.4]{LewNamRou-14c}. The second inequality ~\eqref{eq:int-muN} is established in the course of the proof of~\cite[Lemma 3.8]{LewNamRou-14c}.
\end{proof}
We will apply the above lemma with $P$ a spectral projector below an energy cut-off $L$ for the one-particle operator:
\begin{equation}\label{eq:cut-off}
\quad P:= \1(h\le L)\quad \text{with} \quad h:=(i\nabla + A(x))^2 + V(x). 
\end{equation}
Note that Assumptions~\eqref{eq:assumption-V} ensure that 
\begin{align} \label{eq:estimate-h}
h \ge C^{-1} (1-\Delta+|A(x)|^2+V(x))-C \ge C^{-1}(-\Delta + |x|^s) -C.
\end{align}
Therefore, we have a Cwikel-Lieb-Rosenblum type estimate (see \cite[Lemma 3.3]{LewNamRou-14c})
\begin{equation} \label{eq:estimate-d-dimP} 
d:=\dim (P\gH)\le C L^{1+2/s}.
\end{equation}
The first main improvement over our previous work~\cite{LewNamRou-14c} is a better way to control the error induced by using the finite-rank cut-off $P$. Using several Sobolev-type estimates on the interaction operator $w_N$ (Lemma~\ref{lem:interaction-operator} below), we obtain

\begin{lemma}[\textbf{Second lower bound on the ground state energy}]\label{lem:GSE-2}\mbox{}\\
Let $\beta >0$. For every $\delta\in (0,1/2)$, there exists a constant $C_\delta>0$ such that for all $N \ge 2$, $L\ge 1$ and for all wave functions $\Psi_N\in \gH^N:$
\begin{align}\label{eq:GSE-2}
\frac{\langle \Psi_N, H_N \Psi_N\rangle}{N} \ge e_{{\rm H}, N} &- C_\delta L^{1+\delta}  \frac{d}{N} \nonumber\\
& -  \frac{C_\delta}{L^{1/4-\delta/2}} \Tr\left(h \gamma_{\Psi_N}^{(1)}\right) ^{1/4-\delta/2} \Tr \left(h\otimes h \gamma_{\Psi_N}^{(2)}\right) ^{1/2+\delta}.
\end{align}  
\end{lemma}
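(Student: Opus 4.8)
The plan is to start from the exact expression for the energy per particle,
\begin{equation*}
\frac{\langle \Psi_N, H_N \Psi_N\rangle}{N} = \Tr\big(h\,\gamma^{(1)}\big) + \tfrac12 \Tr\big(w_N\,\gamma^{(2)}\big),
\end{equation*}
where I abbreviate $\gamma^{(k)}:=\gamma^{(k)}_{\Psi_N}$ and write $w_N$ for multiplication by $w_N(x_1-x_2)$ on $\gH^2$, and to compare this with the Hartree functional through the de Finetti measure $\mu_\Psi$ of Lemma~\ref{thm:deF-measure} attached to the spectral projector $P=\1(h\le L)$, with $Q:=1-P$. Up to adding a fixed constant to $h$ (which shifts both sides of \eqref{eq:GSE-2} and the trapping bound \eqref{eq:estimate-h} by the same amount) I may assume $h\ge 1$, so that $Q\le (h/L)^\theta$ for every $\theta\in[0,1]$; this elementary inequality is what will convert high-energy projectors into negative powers of $L$.

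First I would treat the kinetic term. Writing $\sigma:=\int_{SP\gH}|u\rangle\langle u|\,d\mu_\Psi(u)$, taking a partial trace in \eqref{eq:deF estimate} and using $\norm{u}=1$ gives $\norm{\sigma - \Tr_2\big(P^{\otimes2}\gamma^{(2)}P^{\otimes2}\big)}_{\Tr}\le 8d/N$, while monotonicity of the partial trace yields $\Tr_2(P^{\otimes2}\gamma^{(2)}P^{\otimes2})\le P\gamma^{(1)}P$. Since $h$ commutes with $P$, with $0\le hP\le h$ and $\norm{hP}_{\rm op}\le L$, this gives
\begin{equation*}
\int_{SP\gH}\langle u,hu\rangle\,d\mu_\Psi(u)=\Tr(h\sigma)\le \Tr\big(h\gamma^{(1)}\big)+\frac{8Ld}{N}.
\end{equation*}
For the main part of the interaction I would insert \eqref{eq:deF estimate} into $\tfrac12\Tr(w_N P^{\otimes2}\gamma^{(2)}P^{\otimes2})$; since the error is supported on $P\gH\otimes P\gH$, it is controlled by $\norm{P^{\otimes2}w_NP^{\otimes2}}_{\rm op}\cdot 8d/N$. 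Here the Sobolev-type bounds of Lemma~\ref{lem:interaction-operator}, which encode that $w_N\wto a\delta_0$ with $\delta_0\in H^{-1-\delta}(\R^2)$, provide a form bound of the type $\pm w_N\le C_\delta(1+h_1)^{(1+\delta)/2}(1+h_2)^{(1+\delta)/2}$ uniformly in $N$, and together with $hP\le L$ this yields $\norm{P^{\otimes2}w_NP^{\otimes2}}_{\rm op}\le C_\delta L^{1+\delta}$. These two steps already produce the first error term $C_\delta L^{1+\delta}d/N$.

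The heart of the argument is the remainder $\tfrac12\Tr\big(w_N[\gamma^{(2)}-P^{\otimes2}\gamma^{(2)}P^{\otimes2}]\big)$, that is, the pieces $P^{\otimes2}\gamma^{(2)}R+R\gamma^{(2)}P^{\otimes2}+R\gamma^{(2)}R$ with $R:=1-P^{\otimes2}$, each carrying at least one factor $Q$. I would estimate these by treating the form bound on $w_N$ as a Cauchy--Schwarz pairing and then using $Q\le (h/L)^\theta$ to trade each high-energy projector for a factor $L^{-\theta}$ at the cost of $\theta$ extra powers of $h$. The free parameters are $\theta$ and the way the $(1+\delta)$ total derivatives are split between the two particles; optimizing them so that the surviving powers of $h$ are exactly matched by $\Tr(h\gamma^{(1)})$ (one power, on one particle) and $\Tr(h\otimes h\,\gamma^{(2)})$ (two powers) and so that the net power of $L$ is negative leads, after a Hölder inequality balancing the one- and two-particle energies, to the second error term $C_\delta L^{-(1/4-\delta/2)}\Tr(h\gamma^{(1)})^{1/4-\delta/2}\Tr(h\otimes h\,\gamma^{(2)})^{1/2+\delta}$.

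Combining the three estimates gives $\langle\Psi_N,H_N\Psi_N\rangle/N\ge \int_{SP\gH}\cE_{{\rm H},N}(u)\,d\mu_\Psi(u)-C_\delta L^{1+\delta}d/N-(\mathrm{err}_2)$, and since every $u\in SP\gH$ is normalized, $\int\cE_{{\rm H},N}(u)\,d\mu_\Psi\ge e_{{\rm H},N}\int d\mu_\Psi$. It then remains to replace $\int d\mu_\Psi$ by $1$: using \eqref{eq:int-muN}, the bound $\int d\mu_\Psi\le \Tr(P^{\otimes2}\gamma^{(2)}P^{\otimes2})+8d/N$ obtained by tracing \eqref{eq:deF estimate}, and $\Tr(Q\gamma^{(1)})\le L^{-1}\Tr(h\gamma^{(1)})$, the missing-mass correction is bounded by $C(d/N+L^{-1}\Tr(h\gamma^{(1)}))$, which is of lower order and absorbed into the two stated error terms (in the focusing case $e_{{\rm H},N}\le 0$ it is only of size $d/N$). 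The main obstacle is plainly the remainder estimate: because the 2D interaction concentrates to a delta that lies only in $H^{-1-\delta}$, the Sobolev control of $w_N$ is critical, and the delicate point is to run the interpolation so that the constants remain uniform in $N$ while the exponents of $L$, $\Tr(h\gamma^{(1)})$ and $\Tr(h\otimes h\,\gamma^{(2)})$ come out in the precise combination of \eqref{eq:GSE-2}. Everything else is a routine book-keeping of errors of size $O(L^{1+\delta}d/N)$.
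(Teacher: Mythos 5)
Your proposal follows essentially the same route as the paper's proof: the two-body decomposition through the de Finetti measure attached to $P=\1(h\le L)$, the form bounds of Lemma~\ref{lem:interaction-operator} (uniform in $N$ since $\|w_N\|_{L^1}=\|w\|_{L^1}$) giving the $C_\delta L^{1+\delta}d/N$ error on the compressed part, an operator Cauchy--Schwarz combined with $Q\le (h/L)^\theta$ and H\"older-type optimization for the remainder $\Tr\big(w_N(\gamma_{\Psi_N}^{(2)}-P^{\otimes 2}\gamma_{\Psi_N}^{(2)}P^{\otimes 2})\big)$, and the normalization of $\int d\mu_{\Psi_N}$ via \eqref{eq:int-muN} together with $\Tr(Q\gamma_{\Psi_N}^{(1)})\le L^{-1}\Tr(h\gamma_{\Psi_N}^{(1)})$, with the correct exponents throughout. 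The only deviations are cosmetic: you compare kinetic energies at the one-body level (partial trace of \eqref{eq:deF estimate}) where the paper instead uses positivity of $\Tr\big((h_1+h_2)(\gamma_{\Psi_N}^{(2)}-P^{\otimes 2}\gamma_{\Psi_N}^{(2)}P^{\otimes 2})\big)$, and you are somewhat more explicit than the paper about the two-sided control of $\int d\mu_{\Psi_N}$ needed when multiplying by $e_{{\rm H},N}$ of either sign.
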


Lemma~\ref{lem:GSE-2} provides a sharp lower bound to the ground state energy if we have a strong enough {\em a-priori} control of the error terms in the second line of~\eqref{eq:GSE-2}. This is the other important improvement of the present paper. 

\begin{lemma}[\textbf{Moments estimates}]\label{lem:GSE-3}\mbox{}\\
Let $0<\beta <1$ and $\Psi_N \in \gH^N$ be a ground state of $H_N$. For all $\eps\in (0,1)$ we have 
\begin{equation}\label{eq:one-two body bound}
\Tr\left(h \gamma_{\Psi_N}^{(1)}\right) \leq C\frac{1 + |e_{N,\eps}|}{\eps}\quad \text{and} \quad  \Tr \left(h\otimes h \gamma_{\Psi_N}^{(2)}\right) \leq C \left( \frac{1 + |e_{N,\eps}|}{\eps} \right) ^2
\end{equation}
where 
\begin{equation}\label{eq:ener epsilon}
e_{N,\eps}:=N^{-1} \inf_{\Psi\in \gH^N, \|\Psi\|=1}  \left\langle \Psi, \Big( H_N - \eps \sum_{j=1}^N h_j \Big) \Psi \right\rangle. 
\end{equation}
\end{lemma}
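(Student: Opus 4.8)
The plan is to bound both moments by playing the ground-state equation $H_N\Psi_N=Ne_N\Psi_N$ against the operator inequality that defines $e_{N,\eps}$. Throughout write $T_N:=\sum_{j=1}^N h_j$, so that $H_N=T_N+W_N$ with $W_N:=\frac1{N-1}\sum_{i<j}w_N(x_i-x_j)$ and $w_{ij}:=w_N(x_i-x_j)$, abbreviate $\langle\,\cdot\,\rangle:=\langle\Psi_N,\,\cdot\,\Psi_N\rangle$, and recall the two elementary identities $\langle T_N\rangle=N\Tr(h\gamma_{\Psi_N}^{(1)})$ and $\langle h_1h_2\rangle=\Tr(h\otimes h\,\gamma_{\Psi_N}^{(2)})$. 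For the first moment, the very definition \eqref{eq:ener epsilon} of $e_{N,\eps}$ is the operator inequality $H_N-\eps T_N\ge Ne_{N,\eps}$ on $\gH^N$. Testing it on the ground state and using $\langle H_N\rangle=Ne_N$ gives $\eps\langle T_N\rangle\le N(e_N-e_{N,\eps})$. Since the trial state $u^{\otimes N}$ yields $e_N\le e_{{\rm H},N}\le C$ (Appendix~\ref{sec:app}), while $h\ge -C$ forces $T_N\ge-CN$ and hence $e_{N,\eps}\le e_N+\eps C$, we obtain $|e_N-e_{N,\eps}|\le C(1+|e_{N,\eps}|)$. Dividing by $\eps N$ gives the first inequality in \eqref{eq:one-two body bound}.

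For the second moment I would first reduce it to the fluctuations of the interaction. Since $h_j^2\ge0$, one has the operator bound $\sum_{i\ne j}h_ih_j=T_N^2-\sum_jh_j^2\le T_N^2$, whence $\Tr(h\otimes h\,\gamma_{\Psi_N}^{(2)})\le\frac1{N(N-1)}\|T_N\Psi_N\|^2$. The ground-state equation rewrites $T_N\Psi_N=Ne_N\Psi_N-W_N\Psi_N$, and expanding the square yields the exact identity
\begin{equation*}
\|T_N\Psi_N\|^2=\langle T_N\rangle^2+\big(\langle W_N^2\rangle-\langle W_N\rangle^2\big).
\end{equation*}
The first term equals $N^2(\Tr h\gamma_{\Psi_N}^{(1)})^2$, which is already $\le CN^2\big((1+|e_{N,\eps}|)/\eps\big)^2$ by the first-moment bound, so after normalization it contributes $\le C\big((1+|e_{N,\eps}|)/\eps\big)^2$. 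Everything therefore hinges on the interaction variance $\mathrm{Var}(W_N):=\langle W_N^2\rangle-\langle W_N\rangle^2$.

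The last step is to estimate $\mathrm{Var}(W_N)=\frac1{(N-1)^2}\sum_{\{i,j\},\{k,l\}}\mathrm{Cov}(w_{ij},w_{kl})$ using the Sobolev-type bounds of Lemma~\ref{lem:interaction-operator}. Sorting the pairs by the size of their overlap gives three families: the diagonal variances $\mathrm{Var}(w_{ij})\le\langle w_{ij}^2\rangle$ ($O(N^2)$ terms), the single-overlap covariances $\mathrm{Cov}(w_{ij},w_{ik})$ ($O(N^3)$), and the disjoint covariances $\mathrm{Cov}(w_{ij},w_{kl})$ ($O(N^4)$). In the spirit of Lemma~\ref{lem:interaction-operator} one controls $w_N^2$ by $C_\delta N^{2\beta+\delta}\,(1+h)\otimes(1+h)$ and a single factor $\pm w_N$ by $C_\delta N^{\delta}\,(1+h)^{1/2}\otimes(1+h)^{1/2}$. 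After the $1/(N(N-1))$ normalization the diagonal family then carries the prefactor $N^{2\beta-2+\delta}$ and the single-overlap family the prefactor $N^{-1+\delta}$: both involve the second moment $\Tr(h\otimes h\,\gamma_{\Psi_N}^{(2)})$ again, but with a coefficient tending to $0$ when $\beta<1$ and $\delta$ is small, so they can be absorbed into the left-hand side for $N$ large. This is exactly where the hypothesis $\beta<1$ enters.

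The crux, and the step I expect to be the main obstacle, is the disjoint family: being $O(N^4)$, it contributes at order $N^0$ to the second moment. Here the subtraction inside $\mathrm{Cov}(w_{12},w_{34})=\langle w_{12}w_{34}\rangle-\langle w_{12}\rangle\langle w_{34}\rangle$ is essential. Bounding $\langle w_{12}w_{34}\rangle$ directly by the interaction estimates would again reproduce $\Tr(h\otimes h\,\gamma_{\Psi_N}^{(2)})$, now with an $O(1)$ (in fact $N^\delta$-growing) constant, and the bootstrap would not close. One must instead exploit that the mean-field part of $\langle w_{12}w_{34}\rangle$ is cancelled by $\langle w_{12}\rangle\langle w_{34}\rangle$, and show that the remaining genuine two-body correlation carries a prefactor vanishing as $N\to\infty$. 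Proving this quantitative decorrelation of disjoint pairs for the (a priori correlated) ground state—rather than a mere boundedness—is where the real work lies; once it is in hand, all self-referential second-moment terms come with vanishing prefactors and can be moved to the left, yielding the second inequality in \eqref{eq:one-two body bound}.
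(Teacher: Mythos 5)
Your first-moment bound is exactly the paper's argument and is fine. The second-moment part, however, has a genuine gap, and it is precisely the step you flag yourself: after the (correct) identity $\|T_N\Psi_N\|^2=\langle T_N\rangle^2+\mathrm{Var}(W_N)$ you reduce everything to the $O(N^4)$ disjoint covariances $\mathrm{Cov}(w_{12},w_{34})$, and you never prove the quantitative decorrelation that would make them vanish after normalization. For a general (focusing, a priori strongly correlated) ground state there is no known soft argument giving $\mathrm{Cov}(w_{12},w_{34})\to 0$ at the required rate; establishing such factorization of $\gamma_{\Psi_N}^{(4)}$ is essentially as hard as the condensation statement one is ultimately trying to prove, so the proposal is circular at its crux. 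A secondary problem is the single-overlap family: applying $|w_N|\le C_\delta(h\otimes h)^{1/2+\delta}$ to both factors of $w_{12}w_{13}$ produces a genuinely three-body quantity of the type $\Tr\bigl(h^{1+2\delta}\otimes h^{1/2+\delta}\otimes h^{1/2+\delta}\,\gamma_{\Psi_N}^{(3)}\bigr)$, which is not controlled by the second moment; the cruder route $|\mathrm{Cov}(w_{12},w_{13})|\le\langle w_{12}^2\rangle\le CN^{2\beta}\Tr\bigl(h\otimes h\,\gamma_{\Psi_N}^{(2)}\bigr)$ gives the prefactor $N^{2\beta-1}$, which is absorbable only for $\beta<1/2$, not for all $\beta<1$.

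The paper closes exactly this hole by an operator argument that makes decorrelation unnecessary. Instead of expanding a variance, it bounds $N^{-2}\bigl(T_NH_N+H_NT_N\bigr)$ from below as an operator (its $\Psi_N$-expectation being $2e_N\langle T_N\rangle/N$ by the eigenvalue equation). The key observation is that for $i\notin\{j,k\}$ the factor $h_i$ commutes with $w_N(x_j-x_k)$, and the whole sum of such pair potentials can be rewritten through the Hamiltonian itself:
\begin{equation*}
\frac{1}{N-1}\sum_{i\ne j<k\ne i} w_N(x_j-x_k)= H_{N,\eps}-(1-\eps)T_N-\frac{1}{N-1}\sum_{j\ne i}w_N(x_i-x_j)\ \ge\ Ne_{N,\eps}-\Bigl(1-\eps+\frac{CN^{\beta}}{N}\Bigr)T_N,
\end{equation*}
using $H_{N,\eps}\ge Ne_{N,\eps}$ and \eqref{eq:W<=Delta}. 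Since both sides commute with $h_i\ge 1$, one may multiply by $h_i$ and sum over $i$: all the ``disjoint'' terms that defeat your covariance expansion are thus absorbed in a single application of the variational inequality $H_{N,\eps}\ge Ne_{N,\eps}$, with no independence input whatsoever. The remaining overlapping terms are handled by \eqref{eq:Delta-W+W-Delta<=Delta-Delta}, namely $h_jw_N(x_j-x_k)+w_N(x_j-x_k)h_j\ge -CN^\beta h_jh_k$, contributing $-CN^{\beta-3}T_N^2$; together with the retained $\frac{2\eps}{N^2}T_N^2$ (this is where the $\eps$-deformation in \eqref{eq:ener epsilon} earns its keep) one gets a coefficient $\eps-CN^{\beta-1}\ge\eps/2$ in front of $N^{-2}T_N^2$ precisely when $\beta<1$, and the second inequality in \eqref{eq:one-two body bound} follows. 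So your general strategy (play the eigenvalue equation against the $\eps$-perturbed variational principle, absorb self-referential second-moment terms for $\beta<1$) is in the right spirit, but the variance/covariance decomposition is the wrong vehicle: the commuting-multiplier trick is the missing idea that replaces the unproven decorrelation.
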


This is reminiscent of similar estimates used by Erd\"os, Schlein and Yau for the time-dependent problem~\cite{ErdYau-01,ErdSchYau-07,ErdSchYau-10}. Recently, related ideas were also adapted to the ground state problem in \cite{NamRouSei-15}. Note, however, that these previous applications were limited to the defocusing case $w\ge 0$. Then $|e_{N,\eps}|$ is clearly bounded independently of $N$ and the moments estimates above allow to derive the NLS theory for any~$\beta <1$ (an even larger range of $\beta$ can be dealt with when $A=0$, using the methods of~\cite{LieSeiYng-01}). 

In the focusing case, we do not obtain actual a priori bounds by Lemma~\ref{lem:GSE-3}, since the estimates depend on $|e_{N,\eps}|$, which is essentially of the same order of magnitude as $|e_N|$. The uniform bound $|e_{N,\eps}|\le C$ will be obtained by a bootstrap argument: Lemma~\ref{lem:GSE-1} provides the starting point, and then the bounds in Lemmas~\ref{lem:GSE-2} and~\ref{lem:GSE-3} can be improved step by step, provided~\eqref{eq:beta condition} holds. Once stability of the second kind is proved, the convergence of the ground state energy~\eqref{eq:thm-cv-GSE} follows immediately from Lemma~\ref{lem:GSE-2}. The convergence of density matrices~\eqref{eq:thm-cv-DM} is a consequence of the proof of~\eqref{eq:thm-cv-GSE} and the quantum de Finetti Theorem, just as in~\cite{LewNamRou-14c}.

\medskip

\noindent{\bf Organization of the paper.} We will prove Lemma~\ref{lem:GSE-3} in Section~\ref{sec:GSE-3}, then Lemma~\ref{lem:GSE-2} in Section \ref{sec:GSE-2}. The proof of the main Theorem~\ref{thm:cv-nls} is concluded in Section \ref{sec:GSE-4}. Appendix~\ref{sec:app} contains the needed estimate to pass to the limit in the Hartree functional.

\section{Moments estimates: Proof of Lemma \ref{lem:GSE-3}} \label{sec:GSE-3}

Since we can always add a constant to $V$ if necessary, from now on we will assume that $V\ge 1$, and hence $h:=(i\nabla + A(x))^2 + V(x) \ge 1$. We will need the following 


\begin{lemma}[\textbf{Operator bounds for two-body interactions}]\label{lem:interaction-operator}\mbox{}\\
For every $W\in L^1 \cap L^2(\R^2,\R)$, the multiplication operator $W(x-y)$ on $L^2((\R^2)^2)$ satisfies
\begin{align}
& |W(x-y)|\le C \|W\|_{L^2} h_x, \label{eq:W<=Delta}\\
 & |W(x-y)|\le C_\delta \|W\|_{L^1} (h_x h_y)^{1/2+\delta}, \quad \forall \delta>0, \label{eq:W<=Delta-Delta} \\
& \pm \Big( h_x W(x-y) + W(x-y) h_x \Big) \le C \|W\|_{L^2} h_x h_y. \label{eq:Delta-W+W-Delta<=Delta-Delta}
\end{align}
\end{lemma}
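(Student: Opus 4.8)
The plan is to estimate the multiplication operator $W(x-y)$ by exploiting that $h \ge 1$ dominates $-\Delta$ (via \eqref{eq:estimate-h}), so that $h_x^{-1/2}$ and $h_y^{-1/2}$ act as smoothing operators comparable to $(1-\Delta)^{-1/2}$, and then reducing every bound to a Sobolev/Hölder estimate on $W$. Throughout, the key reduction is: to prove an operator inequality of the form $|W(x-y)| \le K$ in the sense of quadratic forms, it suffices to bound $\langle \psi, |W(x-y)| \psi\rangle = \int |W(x-y)| |\psi(x,y)|^2 \, dx\, dy$ by the corresponding expectation of the right-hand side, and I would control the left-hand integral by a duality/Hölder argument placing $W$ in an $L^p$ space and the density of $\psi$ in the dual space, which is in turn controlled by Sobolev embedding in two dimensions.

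First I would establish \eqref{eq:W<=Delta}. Since $h_x \ge C^{-1}(1-\Delta_x) - C$ and in fact $h_x \ge 1$, it is enough to show $|W(x-y)| \le C\|W\|_{L^2}\,(1-\Delta_x)$. Fixing $y$ and viewing everything as an operator in the $x$ variable, I reduce to the one-body claim $|W| \le C\|W\|_{L^2}(1-\Delta)$ on $L^2(\R^2)$. This follows by writing $\langle f, |W| f\rangle \le \|W\|_{L^2}\,\||f|^2\|_{L^2} = \|W\|_{L^2}\,\|f\|_{L^4}^2$ and then invoking the two-dimensional Sobolev/Gagliardo–Nirenberg embedding $\|f\|_{L^4}^2 \le C\|f\|_{H^1}^2 = C\langle f,(1-\Delta)f\rangle$. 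Integrating over $y$ and using $h_x \ge C^{-1}(1-\Delta_x)$ promotes this to the two-variable statement.

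Next, \eqref{eq:W<=Delta-Delta} is the endpoint-type bound exploiting smoothing in both variables, which is why $\|W\|_{L^1}$ (rather than $L^2$) suffices at the cost of an arbitrarily small loss $\delta$. I would write $|W(x-y)| = (h_x h_y)^{1/2+\delta}\, T\, (h_x h_y)^{1/2+\delta}$ where $T = (h_x h_y)^{-1/2-\delta}|W(x-y)|(h_x h_y)^{-1/2-\delta}$, and bound the operator norm of $T$. Using $h \gtrsim 1-\Delta$, the kernel of $(h_xh_y)^{-1/2-\delta}$ is dominated by that of $(1-\Delta_x)^{-1/2-\delta}(1-\Delta_y)^{-1/2-\delta}$; since in two dimensions $(1-\Delta)^{-1/2-\delta}$ has an integrable, indeed bounded-in-suitable-norm kernel for any $\delta>0$ (the Bessel kernel $G_{1+2\delta}$ lies in $L^1$), a Schur test with test function controlled by $\|W\|_{L^1}$ gives $\|T\| \le C_\delta \|W\|_{L^1}$. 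The appearance of $\delta>0$ is exactly the two-dimensional borderline: at $\delta = 0$ the Green's function $(1-\Delta)^{-1}$ has a logarithmic singularity and the $L^1$ bound just fails, so any positive $\delta$ restores integrability.

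Finally, \eqref{eq:Delta-W+W-Delta<=Delta-Delta} controls the symmetrized commutator-type object $h_x W + W h_x$. I would split $h_x = (i\nabla_x + A)^2 + V$ and estimate the kinetic and potential contributions separately, the potential part being handled by \eqref{eq:W<=Delta} multiplied by $h_x$. For the kinetic part, the natural route is to commute: writing $D_x = i\nabla_x + A(x)$, one has $D_x^2 W + W D_x^2 = 2\,\mathrm{Re}(D_x \cdot W D_x) - [D_x,[D_x,W]]$, where the double commutator produces $\nabla W$ and the gradient of $W(x-y)$, and then bound each piece by $h_x h_y$ using \eqref{eq:W<=Delta} together with Cauchy–Schwarz in the form $\pm(AB^* + BA^*) \le AA^* + BB^*$ applied to operators built from $D_x$ and $\sqrt{|W|}$. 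The main obstacle I anticipate is in this last estimate: controlling the cross term $D_x W D_x$ requires absorbing one factor of $D_x$ into $h_x$ and the other into $h_y$ in a symmetric way, and care is needed because $W(x-y)$ couples the two variables, so the derivative $\nabla W(x-y)$ falling on $W$ must be re-expressed as acting symmetrically in $x$ and $y$ before one can close the bound by $\|W\|_{L^2}\, h_x h_y$. I expect the clean way to handle this is to use \eqref{eq:W<=Delta} for the operator $|\nabla W(x-y)|$ — but since $\nabla W$ need not be in $L^2$, the better strategy is to keep $W$ undifferentiated by writing the symmetrized product directly as $\pm(h_x W + W h_x) \le \eta\, h_x W h_x^{-1} \cdots$; more robustly, I would prove it via $\pm(h_x^{1/2} W h_x^{1/2}\text{-type symmetrization})$ and reduce to \eqref{eq:W<=Delta} sandwiched by $h_x^{1/2}$ on each side, which is the step requiring the most bookkeeping.
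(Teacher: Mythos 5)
The paper itself contains no written proof of this lemma: it states that the result is the 2D analogue of \cite[Lemma 3.2]{NamRouSei-15} and omits the argument ``for shortness'', so your proposal has to stand on its own merits. Your proof of \eqref{eq:W<=Delta} is complete and correct: H\"older gives $\langle f,|W|f\rangle\le \|W\|_{L^2}\||f|^2\|_{L^2}=\|W\|_{L^2}\|f\|_{L^4}^2$, the 2D embedding $H^1\subset L^4$ applies, and \eqref{eq:estimate-h} together with $V\ge 1$ upgrades $1-\Delta_x$ to $h_x$. For \eqref{eq:W<=Delta-Delta} your sandwich-and-Schur scheme is workable but needs two repairs. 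First, the inference ``$h\gtrsim 1-\Delta$, hence the kernel of $(h_xh_y)^{-1/2-\delta}$ is dominated by that of $(1-\Delta_x)^{-1/2-\delta}(1-\Delta_y)^{-1/2-\delta}$'' is not valid: a quadratic-form inequality does not imply a pointwise inequality between kernels of inverse powers. You must either invoke the diamagnetic inequality plus $V\ge 1$, which gives $|e^{-th}u|\le e^{-t}e^{t\Delta}|u|$ pointwise and then kernel domination through $h^{-s}=c_s\int_0^\infty t^{s-1}e^{-th}\,dt$, or — simpler — prove the bound with $1-\Delta$ in place of $h$ (e.g.\ by the Fourier computation for the restriction $\int_{\R^2}|\psi(y+u,y)|^2dy$, or your Schur test, where the kernels are explicit Bessel kernels) and then lift it by L\"owner--Heinz operator monotonicity of $t\mapsto t^{1/2+\delta}$, legitimate since $1/2+\delta\le 1$ for $\delta\le 1/2$, the case $\delta>1/2$ being trivial from $h\ge 1$. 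Second, your diagnosis of the borderline is off: $G_\alpha\in L^1(\R^2)$ for \emph{every} $\alpha>0$, so integrability of $G_{1+2\delta}$ is not what fails at $\delta=0$; what the Schur test needs is boundedness of $G_{1+2\delta}*G_{1+2\delta}=G_{2+4\delta}$, equivalently finiteness of $\sup_k\int_{\R^2}(1+|p|^2)^{-1/2-\delta}(1+|k-p|^2)^{-1/2-\delta}\,dp$, which in two dimensions holds exactly when $2+4\delta>2$, i.e.\ for any $\delta>0$ with a logarithmic divergence at $\delta=0$.

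The genuine gap is \eqref{eq:Delta-W+W-Delta<=Delta-Delta}, and you half-acknowledge it: you correctly isolate the double commutator $\sum_j[D_{x_j},[D_{x_j},W]]=-(\Delta W)(x-y)$ as the obstruction (since $\nabla W\notin L^2$ in general), but the fallbacks you sketch cannot close. Any route that symmetrizes and then applies \eqref{eq:W<=Delta} sandwiched by $h_x^{1/2}$ produces the wrong power: operator Cauchy--Schwarz gives $\pm(h_xW+Wh_x)\le \eta\, h_x|W|h_x+\eta^{-1}|W|\le C\|W\|_{L^2}\bigl(\eta\, h_x^2h_y+\eta^{-1}h_y\bigr)$, and no choice of the scalar $\eta$ converts $h_x^2h_y$ into $h_xh_y$ — the estimate is genuinely not a consequence of \eqref{eq:W<=Delta} by bookkeeping. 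The missing idea is translation covariance: $(\nabla_x+\nabla_y)W(x-y)=0$, so derivatives falling on $W$ can be traded for derivatives in the \emph{other} variable. Since $A(x)$ and $A(y)$ commute with $W(x-y)$, one has $\sum_j[D_{x_j},[D_{y_j},W]]=(\Delta W)(x-y)$, whence
\begin{equation*}
D_x^2W+WD_x^2 \;=\; 2\sum_j D_{x_j}WD_{x_j}\;-\;\sum_j\bigl[D_{x_j},[D_{y_j},W]\bigr],
\end{equation*}
and now every term, expanded as a quadratic form, carries at most one derivative per variable and closes: $\pm\sum_j D_{x_j}WD_{x_j}\le \sum_j D_{x_j}|W|D_{x_j}\le C\|W\|_{L^2}\sum_j D_{x_j}h_yD_{x_j}\le C\|W\|_{L^2}\,h_xh_y$ by \eqref{eq:W<=Delta} applied in the $y$ variable, $[D_{x_j},h_y]=0$ and $D_x^2\le h_x$; the commutator terms reduce to $2\Re\langle D_{y_j}D_{x_j}\psi,W\psi\rangle$ and $2\Re\langle D_{x_j}\psi,WD_{y_j}\psi\rangle$, handled by Cauchy--Schwarz together with $D_{y_j}^2\le h_y$ and the key bound $\|W\psi\|^2=\langle\psi,W^2\psi\rangle\le C\|W\|_{L^2}^2\langle\psi,h_xh_y\psi\rangle$, obtained by applying \eqref{eq:W<=Delta-Delta} to $W^2\in L^1$ (note $\|W^2\|_{L^1}=\|W\|_{L^2}^2$) and using $h\ge 1$; finally the potential part $V(x)W+WV(x)=2V(x)W(x-y)$ is elementary because $V(x)$ commutes with both $|W(x-y)|$ and $h_y$, and $V\le h_x$. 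This trading of $\nabla W$ against $\nabla_y$ is precisely the mechanism of the cited proof in \cite{NamRouSei-15}; without it your argument for \eqref{eq:Delta-W+W-Delta<=Delta-Delta} does not go through.
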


Lemma \ref{lem:interaction-operator} is the 2D analogue of~\cite[Lemma 3.2]{NamRouSei-15}. The proof is similar and we omit it for shortness. Now we come to the

\begin{proof}[Proof of Lemma \ref{lem:GSE-3}] Note that $C\ge e_{{\rm H},N} \ge e_N \ge e_{N,\eps}$, and hence 
$|e_N|\le C(1+|e_{N,\eps}|)$. Clearly 
\begin{equation} 
\label{eq:eq:Tr-hG1} H_{N,\eps}:=H_N - \eps \sum_{j=1}^N h_j \ge N e_{N,\eps}.
\end{equation}
Taking the expectation against $\Psi_N$ and using the definition of the one-body density matrix we obtain the first inequality in~\eqref{eq:one-two body bound} immediately. To obtain the second inequality in~\eqref{eq:one-two body bound}, we use the ground state equation 
$$H_N \Psi_N = Ne_N \Psi_N$$
to write
\begin{align} \label{eq:dGammah2-upper} & \frac{1}{N^2}\left\langle \Psi_N, \Big(\Big(\sum_{j=1}^N h_j\Big)H_N  + H_N \Big(\sum_{j=1}^N h_j\Big)  \Big) \Psi_N \right\rangle \nn \\
& \qquad \qquad\qquad\qquad = \frac{2e_N}{N}\left\langle \Psi_N, \sum_{j=1}^N h_j \Psi_N \right\rangle \le \frac{C(1+|e_{N,\eps}|)^2}{\eps}.
\end{align}
Now we are after an operator lower bound on
\begin{align} \label{eq:wjk-hi-0}
\frac{1}{N^2} & \Big(\sum_{j=1}^N h_j\Big)H_N  + \frac{1}{N^2} H_N \Big(\sum_{j=1}^N h_j\Big) = \frac{2}{N^2}\Big(\sum_{j=1}^N h_j 
\Big)^2 \nn\\
&+ \frac{1}{N^2(N-1)}\sum_{i=1}^N \sum_{j<k} (h_i w_N(x_j-x_k)+ w_N(x_j-x_k) h_i).
\end{align}
For every $i=1,2,...,N$, we have
\begin{align} \label{eq:wjk-hi}
\frac{1}{N-1}\sum_{i\ne j<k \ne i} w_N(x_j-x_k) &= H_{N,\eps} - (1-\eps) \sum_{j=1}^N h_j - \frac{1}{N-1}\sum_{j\ne i} w_N(x_i-x_j)\nn\\
&\ge Ne_{N,\eps} - \left( 1-\eps + \frac{N^\beta}{N} \right) \sum_{j=1}^N h_j 
\end{align}
where we have used $H_{N,\eps}\ge Ne_{N,\eps}$ and applied ~\eqref{eq:W<=Delta} to obtain $ w_N(x_i-x_j)\le C N^\beta h_j.$ Note that both sides of \eqref{eq:wjk-hi} commute with $h_i$. Therefore, we can multiply \eqref{eq:wjk-hi} with $h_i$ and then take the sum over $i$ to obtain 
\begin{align} \label{eq:wjk-hi-1}
&\frac{1}{N^2(N-1)}\sum_{i=1}^N \sum_{i\ne j<k \ne i} (h_i w_N(x_j-x_k)+w_N(x_j-x_k)h_i) \nn\\
&\ge \frac{2e_{N,\eps}}{N}\sum_{j=1}^N h_j - \frac{2}{N^2}\Big( 1-\eps + \frac{CN^\beta}{N} \Big) \Big( \sum_{j=1}^N h_j \Big)^2 .
\end{align}
On the other hand, for every $j\ne k$, by \eqref{eq:Delta-W+W-Delta<=Delta-Delta} we have 
\begin{align*}
h_j w_N (x_j-x_k)  + w_N(x_j-x_k)h_j  \ge -C N^\beta h_j h_k .
\end{align*}
Therefore,
\begin{align} \label{eq:wjk-hi-2}
\frac{1}{N^2(N-1)}\sum_{j \ne k} \Big(h_j w_N(x_j-x_k)+w_N(x_j-x_k)h_j \Big) \ge - CN^{\beta-3} \Big(\sum_{j=1}^N h_j \Big)^2.
\end{align}
Inserting \eqref{eq:wjk-hi-1} and \eqref{eq:wjk-hi-2} into \eqref{eq:wjk-hi-0}, we find the operator bound
\begin{align}\label{eq:HN2}
&\frac{1}{N^2}\Big(\sum_{j=1}^N h_j\Big)H_N + \frac{1}{N^2} H_N \Big(\sum_{j=1}^N h_j\Big) \nonumber\\
&\qquad\qquad\qquad \ge \frac{2}{N^2}\Big( \eps - \frac{CN^{\beta}}{N} \Big)  \Big( \sum_{j=1}^N h_j\Big)^2 - \frac{C(1+|e_{N,\eps}|)}{N} \sum_{j=1}^N h_j. 
\end{align}
Taking the expectation against $\Psi_N$ and using the first inequality in \eqref{eq:one-two body bound}, we get 
\begin{align} \label{eq:dGammah2-lower}
&\frac{1}{N^2}\left\langle \Psi_N, \Big(\Big(\sum_{j=1}^N h_j\Big)H_N + H_N \Big(\sum_{j=1}^N h_j\Big) \Big) \Psi_N \right\rangle \nn \\
&\qquad\qquad\qquad\ge  \frac{2}{N^2}\Big(\eps-\frac{N^{\beta}}{N} \Big) \left\langle \Psi_N, \Big( \sum_{j=1}^N h_j \Big)^2 \Psi_N \right\rangle - \frac{C(1+|e_{N,\eps}|)^2}{\eps}.
\end{align}
Putting \eqref{eq:dGammah2-upper} and \eqref{eq:dGammah2-lower} together, we deduce that
$$
\frac{2}{N^2}\Big(\eps-\frac{N^{\beta}}{N} \Big) \left\langle \Psi_N, \Big( \sum_{j=1}^N h_j \Big)^2 \Psi_N \right\rangle \le \frac{C(1+|e_{N,\eps}|)^2}{\eps}.
$$
If $\beta<1$, then $\eps-CN^{\beta-1} \ge \eps/2>0$ for large $N$. Therefore, we conclude 
\begin{equation}\label{eq:moment final}
\frac{1}{N^2} \left\langle \Psi_N, \Big( \sum_{j=1}^N h_j \Big)^2 \Psi_N \right\rangle \le  \frac{C(1+|e_{N,\eps}|)^2}{\eps^2} 
\end{equation}
and the second inequality in~\eqref{eq:one-two body bound} follows by definition of the two-body density matrix.
\end{proof}

\section{Lower bound via de Finetti: Proof of Lemma~\ref{lem:GSE-2}} \label{sec:GSE-2}

Again we can assume without loss of generality that $V \ge 1$, and hence $h\ge 1$. Take an arbitrary wave function $\Psi_N\in \gH^N$. We have
$$
\frac{\langle \Psi_N, H_N \Psi_N \rangle}{N} =  \Tr\left(K_2 \gamma_{\Psi_N}^{(2)}\right) \quad \text{where}\quad K_2= \frac{1}{2} \Big( h_x + h_y  +  w_N(x-y)\Big).
$$
Let $\Psi_N$ be a many-body wave function and $d\mu_{\Psi_N}$ the associated de Finetti measure defined in Lemma~\ref{thm:deF-measure} with the projector $P$ as in~\eqref{eq:cut-off}. We write
\begin{align} \label{eq:Tr-KN-G2-decomposition}
 \Tr\left(K_2 \gamma_{\Psi_N}^{(2)}\right) &= \int \langle u^{\otimes 2}, K_2 u^{\otimes 2} \rangle d\mu_{\Psi_N}(u) + \Tr(K_2 (\gamma_{\Psi_N}^{(2)} - P^{\otimes 2}\gamma_{\Psi_N}^{(2)}P^{\otimes 2})) \nn \\
&+ \Tr \left(K_2 \left(P^{\otimes 2}\gamma_{\Psi_N}^{(2)}P^{\otimes 2} - \int |u^{\otimes 2}\rangle \langle u^{\otimes 2}| d\mu_{\Psi_N}(u)\right) \right)
\end{align}
and bound the right side from below term by term. 

\medskip

\noindent {\bf Main term.} By the variational principle we have 
\begin{align}\label{eq:validity-Hartree-A0} \int \langle u^{\otimes 2}, K_2 u^{\otimes 2} \rangle d\mu_{\Psi_N}(u) &= \int \cE_{{\rm H},N}(u)  d\mu_{\Psi_N}(u) \ge e_{{\rm H},N} \int d\mu_{\Psi_N}.
\end{align}
On the other hand, using \eqref{eq:int-muN} and $Q\le L^{-1}h$ with $ Q := \1 - P$, we have
\begin{align*}
\int d\mu_{\Psi_N} & \ge \left(\Tr\left(P\gamma_{\Psi_N}^{(1)}\right)\right)^2 = \left( 1 - \Tr \left(Q\gamma_{\Psi_N}^{(1)}\right) \right)^2  \\
&\ge 1 - 2 \Tr \left(Q\gamma_{\Psi_N}^{(1)}\right)  \ge  1 -2L^{-1} \Tr\left(h\gamma_{\Psi_N}^{(1)} \right).
\end{align*}
Since $|e_{{\rm H},N}|\le C$, \eqref{eq:validity-Hartree-A0} reduces to 
\begin{align} \label{eq:decompositon-A0} \int \langle u^{\otimes 2}, K_2 u^{\otimes 2} \rangle d\mu_{\Psi_N}(u) \ge e_{{\rm H},N} - CL^{-1} \Tr(h\gamma_{\Psi_N}^{(1)} ) .
\end{align}

\medskip

\noindent {\bf First error term.} Using $Ph \le LP$ and Lemma~\ref{thm:deF-measure} we find that
\begin{equation*} 
\left| \Tr \left((h_1+h_2) \left(P^{\otimes 2}\gamma_{\Psi_N}^{(2)}P^{\otimes 2} - \int |u^{\otimes 2}\rangle \langle u^{\otimes 2}| d\mu_{\Psi_N}(u)\right) \right) \right| \leq CL \frac{d}{N}.
\end{equation*}
On the other hand, using Equation~\eqref{eq:W<=Delta-Delta}, we have 
$$P^{\otimes 2}|w_N(x_1-x_2)|P^{\otimes 2}\le C_\delta ((Ph)_1 \otimes (Ph)_2)^{1/2+\delta} \le CL^{1+2\delta}P^{\otimes 2}$$
for all $\delta>0$. Therefore, using Lemma~\ref{thm:deF-measure} again, we find 
\begin{equation*} 
\left|\Tr \left(w_N \left(P^{\otimes 2}\gamma_{\Psi_N}^{(2)}P^{\otimes 2} - \int |u^{\otimes 2}\rangle \langle u^{\otimes 2}| d\mu_{\Psi_N}(u)\right) \right) \right| \leq C_\delta L^{1+2\delta} \frac{d}{N}.
\end{equation*}
Thus for all $\delta>0$,
\begin{equation} \label{eq:decompositon-A1}
\Tr \left(K_2 \left(P^{\otimes 2}\gamma_{\Psi_N}^{(2)}P^{\otimes 2} - \int |u^{\otimes 2}\rangle \langle u^{\otimes 2}| d\mu_{\Psi_N}(u)\right) \right) \ge -C_\delta L^{1+2\delta} \frac{d}{N}.
\end{equation}

\medskip

\noindent {\bf Second error term.} Since $h$ commutes with $P$ and $h\ge hP$, we have
\begin{align*}
\Tr \Big( (h_1+h_2)(\gamma_{\Psi_N}^{(2)}-P^{\otimes 2}\gamma_{\Psi_N}^{(2)}P^{\otimes 2})\Big) 
=  \Tr \Big( \Big[ (h_1+h_2) - P^{\otimes 2}(h_1+h_2)P^{\otimes 2} \Big] \gamma_{\Psi_N}^{(2)} \Big) \ge 0.
\end{align*}
Using the Cauchy-Schwarz inequality for operators
$$
\pm (AB+B^*A^*) \le \eta^{-1} AA^* + \eta B^*B, \quad \forall \eta>0,
$$
we find that
\begin{align*}
&\pm 2\left(\gamma_{\Psi_N}^{(2)} - P^{\otimes 2}\gamma_{\Psi_N}^{(2)}P^{\otimes 2}\right) \\
 &= \pm \Big[ (1 - P^{\otimes 2}) \gamma_{\Psi_N}^{(2)} + \gamma_{\Psi_N}^{(2)} (1-P^{\otimes 2}) + P^{\otimes 2}  \gamma_{\Psi_N}^{(2)}  (1 - P^{\otimes 2}) +  (1 - P^{\otimes 2}) \gamma_{\Psi_N}^{(2)} P^{\otimes 2} \Big] \\
& \le 2\eta^{-1} (1 - P^{\otimes 2})  \gamma_{\Psi_N}^{(2)} (1 - P^{\otimes 2}) + \eta( \gamma_{\Psi_N}^{(2)} + P^{\otimes 2} \gamma_{\Psi_N}^{(2)} P^{\otimes 2} )
\end{align*}
for all $\eta>0$. Taking the trace against $(w_N)_\pm$ and optimizing over $\eta>0$ we find that 
\begin{align} \label{eq:KG-PGP-CS}
\Tr(w_N(\gamma_{\Psi_N}^{(2)} - P^{\otimes 2} &\gamma_{\Psi_N}^{(2)}P^{\otimes 2})) \ge  - \sqrt{2}\Big( \Tr \Big(|w_N| (\gamma_{\Psi_N}^{(2)} + P^{\otimes 2}  \gamma_{\Psi_N}^{(2)} P^{\otimes 2} ) \Big)\Big)^{1/2} \nn\\
&\times \Big( \Tr \Big(|w_N| (1 - P^{\otimes 2})  \gamma_{\Psi_N}^{(2)} (1 - P^{\otimes 2}) \Big) \Big)^{1/2}.
\end{align}
Using again Equation~\eqref{eq:W<=Delta-Delta} and the elementary fact
\begin{align} \label{eq:t^r=}
t^{r}=\inf_{\eta>0}\Big(r\eta^{-1}t + (1-r) \eta^{r/(1-r)}\Big)\quad \text{for all}~~ t\ge 0, r\in (0,1)
\end{align}
we get
$$ |w_N(x-y)| \le C_\delta (h_xh_y)^{1/2+\delta} \le C_\delta( \eta^{-1} h_xh_y + \eta^{\frac{1+2\delta}{1-2\delta}}) \quad \text{for all}~~ \delta \in (0,1/2), \eta>0.$$
Taking the trace against $\gamma_{\Psi_N}^{(2)} + P^{\otimes 2}  \gamma_{\Psi_N}^{(2)} P^{\otimes 2}$ and optimizing over $\eta>0$ (cf. \eqref{eq:t^r=}), we get
$$ \Tr \left( |w_N|\left(\gamma_{\Psi_N}^{(2)} + P^{\otimes 2}  \gamma_{\Psi_N}^{(2)} P^{\otimes 2} \right) \right) \le 2 C_\delta  \left( \Tr \left( h\otimes h \gamma_{\Psi_N}^{(2)}  \right) \right) ^{1/2+\delta}
$$
for all $\delta\in (0,1/2)$. Similarly, from \eqref{eq:W<=Delta-Delta}, \eqref{eq:t^r=} and $Q\le L^{-1}h$, we find that 
\begin{align*}
(1-P^{\otimes 2}) |w_N| (1-P^{\otimes 2}) &\le C_\delta (1-P^{\otimes 2}) (h\otimes h)^{1/2+\delta} (1-P^{\otimes 2}) \\
& \le C_\delta \Big[ (Q h^{1/2+\delta} ) \otimes h^{1/2+\delta} + h^{1/2+\delta} \otimes (Qh^{1/2+\delta} ) \Big] \\
& \le \frac{C_\delta}{L^{1/2-\delta}} \Big[ h \otimes h^{1/2+\delta} + h^{1/2+\delta} \otimes h \Big] \\
& \le \frac{C_\delta}{L^{1/2-\delta}} \Big[ \eta^{-1} h \otimes h + \eta^{\frac{1+2\delta}{1-2\delta}}(h\otimes 1+ 1\otimes h)\Big]
\end{align*}
for all $\delta\in (0,1/2)$ and $\eta>0$. Taking the trace against $\gamma_{\Psi_N}^{(2)}$ and optimizing over $\eta>0$ 
we deduce that
\begin{align*}
\Tr \Big( |w_N|(1-P^{\otimes 2}) \gamma_{\Psi_N}^{(2)} (1-P^{\otimes 2}) \Big) \le \frac{C_\delta}{L^{1/2-\delta}}  \left( \Tr \Big(h \gamma_{\Psi_N}^{(1)} \Big)  \right)^{1/2-\delta} \left( \Tr \Big( h\otimes h \gamma_{\Psi_N}^{(2)}  \Big) \right) ^{1/2+\delta}
\end{align*}
for all $\delta\in (0,1/2)$. Therefore, ir follows from \eqref{eq:KG-PGP-CS} that
\begin{align} \label{eq:decompositon-A2}
&\Tr(w_N(\gamma_{\Psi_N}^{(2)} - P^{\otimes 2} \gamma_{\Psi_N}^{(2)}P^{\otimes 2})) \nn\\
&\ge  - \frac{C_\delta}{L^{1/4-\delta/2}} \left( \Tr \Big(h \gamma_{\Psi_N}^{(1)} \Big)  \right)^{1/4-\delta/2} \left( \Tr \Big( h\otimes h \gamma_{\Psi_N}^{(2)}  \Big) \right) ^{1/2+\delta}.
\end{align}

\medskip

\noindent {\bf Summary.} Inserting the estimates \eqref{eq:decompositon-A0}, \eqref{eq:decompositon-A1} and \eqref{eq:decompositon-A2} in \eqref{eq:Tr-KN-G2-decomposition} we find the desired lower bound. \qed

\section{Final energy estimate: Proof of Theorem \ref{thm:cv-nls}} \label{sec:GSE-4}

We again assume, without loss of generality, that $V \ge 1$. We apply Lemma~\ref{lem:GSE-2} to a ground state $\Psi_N$ of $H_N$, then insert the dimension estimate~\eqref{eq:estimate-d-dimP} and the results of Lemma~\ref{lem:GSE-3} (recall the definition~\eqref{eq:ener epsilon}). This gives
\begin{align} \label{eq:validity-Hartree-last-estimate}
e_{{\rm H},N}\ge e_N \ge e_{{\rm H},N} &- C_{\delta} \left(  \frac{L^{2+2/s+2\delta}}{N} + \frac{1}{L^{1/4-\delta/2}} \left( \frac{1+|e_{N,\eps}|}{\eps}\right) ^{5/4+3\delta/2} \right) 
\end{align}
for all $\eps>0$, $\delta\in (0,1/2)$, $N \ge 2$ and $L \ge 1$. 

\medskip

\noindent{\bf Stability of the second kind.} We will deduce from \eqref{eq:validity-Hartree-last-estimate} that $|e_{N,\eps}| \leq C$ for $\eps>0$ small, provided~\eqref{eq:beta condition} holds. Using \eqref{eq:validity-Hartree-last-estimate} with $w$ replaced by $(1-\eps)^{-1}w$, we have
\begin{align} \label{eq:validity-Hartree-last-estimate-eps}
e_{{\rm H},N}^{\eps}\ge e_{N,\eps} \ge e_{{\rm H},N}^{\eps} &- C_{\delta} \left(  \frac{L^{2+2/s+2\delta}}{N} + \frac{1}{L^{1/4-\delta/2}} \left( \frac{1+|e_{N,{\eps'}}|}{\eps'-\eps}\right) ^{5/4+3\delta/2} \right) 
\end{align}
for all $1>\eps'>\eps>0$ and $\delta\in (0,1/2)$, where $e_{{\rm H},N} ^{\eps}$ is the ground state energy of the Hartree functional with $h$ replaced by $(1-\eps)h$ (similarly as in~\eqref{eq:ener epsilon}). 
Using Assumption~\eqref{eq:assumption-w2}, Lemma~\ref{lem:Hartree-NLS} and the diamagnetic inequality $\langle u, h u \rangle \ge \int |\nabla |u||^2$, we find that there exists some $\eps_0>0$ (depending only on $w$) such that 
\begin{align} \label{eq:eHN-eps} e_{{\rm H},N} ^{\eps} \geq -C ~~\mbox{ for all } 0<\eps<\eps_0.
\end{align}
We make the induction hypothesis (labeled $I_\eta$)
\begin{equation}\label{eq:induc}
\limsup_{N\to \infty} \frac{|e_{N,\eps}|}{1+N^{\eta}} <\infty \mbox{ for all } 0< \eps <\eps_0. 
\end{equation}
Note that $I_\eta$ holds for $\eta = 2 \beta -1$ by Lemma~\ref{lem:GSE-1}, and we ultimately aim at proving $I_0$. From \eqref{eq:validity-Hartree-last-estimate} and $\eqref{eq:eHN-eps}$, by choosing $L=N^\tau$ with $\tau>0$, we deduce that if $I_\eta$ holds for some $\eta \leq 2\beta - 1$, then $I_{\eta'}$ also holds provided that 
\begin{align} \label{eq:eta'-eta} \eta' >  \max \Big\{ \tau(2+2/s) - 1, (5\eta-\tau)/4 \Big\}\quad \text{for some}~\tau>0.
\end{align}
With the optimal choice $\tau=s(5\eta+4)/(9s+8)$, the requirement \eqref{eq:eta'-eta} reduces to
\begin{align} \label{eq:eta'-eta-b} \eta' > \eta - \frac{s-\eta(s+2)}{9s+8}.
\end{align}
When $\beta<(s+1)/(s+2)$, we can choose a constant $c$ such that
$$ 0< c < \frac{s-(2\beta-1)(s+2)}{9s+8}$$
and it is clear that \eqref{eq:eta'-eta-b} holds with $\eta'=\eta-c$ because $\eta\le 2\beta-1$. Thus we have shown that $I_{\eta}$ implies $I_{\eta - c}$ for some constant $c>0$ independent of $\eta$. Repeating the argument sufficiently many times we finally deduce that $I_0$ holds, which is the desired stability bound. 

\medskip

\noindent{\bf Conclusion.} Now, using $|e_{N,\eps}|\le C$ for $\eps>0$ small, \eqref{eq:validity-Hartree-last-estimate} reduces to 
\begin{align} \label{eq:validity-Hartree-net-1}
e_{{\rm H},N}\ge e_N \ge e_{{\rm H},N} &-  C_\delta \Big( \frac{L^{2+2/s+2\delta}}{N} + \frac{1}{L^{1/4-\delta/2}} \Big)
\end{align}
for all $\delta \in (0,1/2)$ and $L\ge 1$. By choosing $L= N^{4/(9+8/s)}$ we conclude that
\begin{align} \label{eq:validity-Hartree-net-2}
e_{{\rm H},N}\ge e_N \ge e_{{\rm H},N} - C_\alpha N^{-\alpha}
\end{align}
for every $0<\alpha<s/(9s+8)$. The desired energy convergence~\eqref{eq:thm-cv-GSE} follows from~\eqref{eq:validity-Hartree-net-2} and $\lim_{N\to \infty} e_{{\rm H},N} = e_{\rm NLS}$ (see Appendix \ref{sec:app}). 
Once the convergence of the energy is established, the convergence of states~\eqref{eq:thm-cv-DM} follows exactly as in~\cite[Section 4.3]{LewNamRou-14c}, and we omit the details.

\appendix

\section{From Hartree to NLS}\label{sec:app}

Here we prove an elementary lemma which, together with the variational principle, implies that $\lim_{N\to \infty}e_{{\rm H},N}=e_{\rm NLS}$. 

\begin{lemma}[\textbf{Limit of the Hartree interaction energy}] \label{lem:Hartree-NLS}\mbox{}\\
For every $w\in L^1(\R^2)$ with $\int w=a$,   
\begin{align*}
\lim_{\lambda\to \infty} \sup_{\substack{u\in H^1\\ u\ne 0}}  \left\| |u| \right\|_{H^1}^{-4}\left|\iint |u(x)|^2 \lambda^3  w(\lambda (x-y)) |u(y)|^2 dxdy - a\int |u(x)|^4dx \right| =0.
\end{align*}
\end{lemma}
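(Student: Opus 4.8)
The plan is to reduce the statement to a single scalar quantity and to control it \emph{uniformly} in $u$ by passing to Fourier space. Writing $v:=|u|\in H^1(\R^2)$ and $\rho:=|u|^2=v^2$, and setting $w_\lambda(x):=\lambda^2 w(\lambda x)$ (the two-dimensional approximate-identity scaling attached to $w_N=N^{2\beta}w(N^\beta\cdot)$ with $\lambda=N^\beta$, so that $\int w_\lambda=\int w=a$), the expression inside the absolute value is
\[
D(u):=\iint_{\R^2\times\R^2}\rho(x)\,w_\lambda(x-y)\,\rho(y)\,dx\,dy-a\int_{\R^2}\rho^2 .
\]
Since $w$ is real and even and $a=\hat w(0)=\hat w_\lambda(0)$, Plancherel turns this into
\[
D(u)=\frac{1}{(2\pi)^2}\int_{\R^2}\bigl(\hat w(\xi/\lambda)-\hat w(0)\bigr)\,|\hat\rho(\xi)|^2\,d\xi ,
\]
which is legitimate because $\rho\in L^1\cap L^2$ (the $L^2$ bound being $\int\rho^2=\int|u|^4<\infty$ by Gagliardo--Nirenberg). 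As $w\in L^1$, the function $\hat w$ is continuous and bounded with $\hat w(0)=a$ and $\norm{\hat w}_{L^\infty}\le\norm{w}_{L^1}$; in particular the modulus of continuity $\omega(\delta):=\sup_{|\eta|\le\delta}|\hat w(\eta)-\hat w(0)|$ tends to $0$ as $\delta\to0$. The whole problem is then to control $|\hat\rho(\xi)|^2$ uniformly after dividing by $\norm{v}_{H^1}^4$.

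For this I would establish two a priori bounds on $\rho$. The first is the mass bound $\int|\hat\rho|^2\simeq\int\rho^2=\int|u|^4\le C\norm{v}_{H^1}^4$, which is exactly the two-dimensional inequality \eqref{eq:GN}. The second, and the key point, is a uniform \emph{fractional} regularity estimate: for every $0<s<1/2$ there is $C_s$ with $\norm{\rho}_{\dot H^s}\le C_s\norm{v}_{H^1}^2$, and hence $\int_{|\xi|>R}|\hat\rho|^2\le C_s R^{-2s}\norm{v}_{H^1}^4$. I would prove this through the Gagliardo difference-quotient representation $\norm{\rho}_{\dot H^s}^2\simeq\int_{\R^2}|z|^{-2-2s}\norm{\tau_z\rho-\rho}_{L^2}^2\,dz$ (with $\tau_z\rho=\rho(\cdot-z)$), after bounding the translation modulus of $\rho$. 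Writing $\tau_z\rho-\rho=(\tau_z v-v)(\tau_z v+v)$ and using H\"older, $\norm{\tau_z\rho-\rho}_{L^2}\le\norm{\tau_z v-v}_{L^4}\norm{\tau_z v+v}_{L^4}\le 2\norm{v}_{L^4}\norm{\tau_z v-v}_{L^4}$; here $\norm{v}_{L^4}\le C\norm{v}_{H^1}$ by Gagliardo--Nirenberg, and interpolating the $H^1$ modulus of continuity $\norm{\tau_z v-v}_{L^2}\le|z|\,\norm{\nabla v}_{L^2}$ against the uniform bound $\norm{\tau_z v-v}_{L^p}\le2\norm{v}_{L^p}$ (valid since $H^1(\R^2)\hookrightarrow L^p$ for all $p<\infty$) gives $\norm{\tau_z v-v}_{L^4}\le C|z|^{\alpha}\norm{v}_{H^1}$ for any $\alpha<1/2$. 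Together with the crude bound $\norm{\tau_z\rho-\rho}_{L^2}\le C\norm{v}_{H^1}^2$ for large $|z|$, the difference-quotient integral converges for $s<\alpha$, which yields the claim.

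With these two bounds in hand I would split the Fourier integral at $|\xi|=R$ and obtain, uniformly in $u\ne0$,
\[
\frac{|D(u)|}{\norm{v}_{H^1}^4}\le C\,\omega(R/\lambda)+C'\,\norm{w}_{L^1}\,R^{-2s}.
\]
Choosing $R=R(\lambda)\to\infty$ with $R/\lambda\to0$ (e.g.\ $R=\sqrt{\lambda}$) sends both terms to $0$ as $\lambda\to\infty$, which is the assertion. The main obstacle is precisely the uniform high-frequency control, i.e.\ the estimate $\norm{|u|^2}_{\dot H^s}\le C_s\norm{|u|}_{H^1}^2$: without a uniform decay of $\int_{|\xi|>R}|\hat\rho|^2$ one cannot absorb the region where $\hat w(\xi/\lambda)-\hat w(0)$ fails to be small, and the supremum over $u$ could stay bounded away from $0$. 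I would also note that working in Fourier space is what keeps the hypotheses at the minimal level $w\in L^1$, $\int w=a$: the more elementary route of bounding $D(u)$ directly, via the symmetrisation $D(u)=-\tfrac12\iint w_\lambda(x-y)[\rho(x)-\rho(y)]^2$ and then $|D(u)|\le\tfrac12\int|w_\lambda(z)|\,\norm{\tau_z\rho-\rho}_{L^2}^2\,dz$, produces the factor $\int|w_\lambda(z)|\,|z|^{2\alpha}\,dz=\lambda^{-2\alpha}\int|w(y)|\,|y|^{2\alpha}\,dy$ and so appears to require a moment condition on $w$ that the statement does not assume.
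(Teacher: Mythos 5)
Your proof is correct, and it takes a genuinely different route from the paper's. The paper stays in physical space: after the change of variables $z=\lambda(x-y)$ it splits the \emph{potential}, $w=\1(|z|\le L)w+\1(|z|>L)w$; the tail is handled by Cauchy--Schwarz and the Sobolev embedding $\|u\|_{L^4}\le C\|u\|_{H^1}$, costing $C\|u\|_{H^1}^4\int_{|z|>L}|w|$ (small by dominated convergence, with no smallness in $\lambda$ required), while the core is handled by the fundamental theorem of calculus, $\bigl||u(x-\lambda^{-1}z)|^2-|u(x)|^2\bigr|\le 2\lambda^{-1}|z|\int_0^1 |\nabla u|\,|u|\,(x-t\lambda^{-1}z)\,dt$, followed by H\"older with exponents $(3,2,6)$ and Sobolev, costing $C\lambda^{-1}L\|u\|_{H^1}^4\|w\|_{L^1}$; the choice $L=\sqrt{\lambda}$ finishes. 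You instead split in \emph{frequency} at $|\xi|=R$, which forces your key uniform estimate $\norm{|u|^2}_{\dot H^s}\le C_s\norm{|u|}_{H^1}^2$ for $0<s<1/2$; your Gagliardo-seminorm proof of it is sound (and, incidentally, Gagliardo--Nirenberg in the form $\|f\|_{L^4}^2\le C\|f\|_{L^2}\|\nabla f\|_{L^2}$ applied to $f=\tau_z v-v$ gives the translation modulus with exponent $\alpha=1/2$ directly, without interpolating through $L^p$). What each approach buys: the paper's is entirely elementary (Sobolev embeddings only) with a rate governed by the tail $\int_{|z|>L}|w|$; yours isolates the mechanism transparently (continuity of $\hat w$ at $0$ tested against the low-frequency concentration of $|\hat\rho|^2$) with a rate governed by the modulus of continuity of $\hat w$, at the price of Plancherel and fractional Sobolev machinery. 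One correction to your closing remark: the claim that the elementary route ``appears to require a moment condition on $w$'' is mistaken. In your own symmetrized bound $|D(u)|\le\tfrac12\int|w_\lambda(z)|\,\norm{\tau_z\rho-\rho}_{L^2}^2\,dz$, split the $z$-integral at $|z|=L/\lambda$: the inner region contributes at most $(L/\lambda)^{2\alpha}\norm{w}_{L^1}\norm{v}_{H^1}^4$, the outer region at most $C\norm{v}_{H^1}^4\int_{|y|>L}|w(y)|\,dy$ by the crude bound $\norm{\tau_z\rho-\rho}_{L^2}\le C\norm{v}_{H^1}^2$, and no moment of $w$ ever appears --- this cutoff of $w$ itself is precisely the paper's proof, transplanted to your notation, and is the exact physical-space counterpart of your frequency cutoff at $R$. (Two incidental points: you correctly read the $\lambda^3$ in the statement as the two-dimensional scaling $\lambda^2$, and indeed the paper's own change of variables confirms $\lambda^3$ is a typo; and the evenness of $w$ you invoke is unnecessary, since only the continuity of $\hat w$ at $0$ is used and the quadratic form sees only the even part of $w$ anyway.)
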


\begin{proof} It suffices to consider the case when $u\ge 0$. By introducing the variable $z=\lambda(x-y)$, we can write
\begin{align}
&\iint |u(x)|^2 \lambda^3  w(\lambda (x-y)) |u(y)|^2 dxdy - a\int |u(x)|^4dx\nn\\ 
=& \iint |u(x)|^2  w(z) \Big( |u(x-\lambda^{-1}z)|^2 - |u(x)|^2 \Big) dxdz. \label{eq:Hartree-NLS-0}
\end{align} 
Now we pick $L>0$ and decompose 
$$w(z)=\1(|z|> L)w(z)+ \1(|z|\le L)w(z).$$ 
By the Cauchy-Schwarz inequality 
$$2|u(x)|^2 |u(x-\lambda^{-1}z)|^2 \le |u(x)|^4 + |u(x-\lambda^{-1}z)|^4$$
and the Sobolev's embedding $\|u\|_{L^4}\le C \|u\|_{H^1}$ we have
\begin{align}
&\left| \iint |u(x)|^2  \1(|z|> L)w(z) \Big( |u(x-\lambda^{-1}z)|^2 - |u(x)|^2 \Big) dxdz \right| \nn\\
 \le &  \iint \Big( \frac{3}{2}|u(x)|^4 + \frac{1}{2} |u(x-\lambda^{-1}z)|^4 \Big)  \1(|z|> L)|w(z)| dxdz \nn\\
 = & 2 \|u\|_{L^4}^4 \int_{|z|>L}|w(z)| dz \le C \|u\|_{H^1}^4 \int_{|z|>L}|w(z)| dz. \label{eq:Hartree-NLS-1}
\end{align}  
On the other hand, note that
\begin{align*} &\left| |u(x-\lambda^{-1}z)|^2 - |u(x)|^2 \right|=  \left| \int_0^1 (-\lambda^{-1}z).(\nabla |u|^2)(x-t \lambda^{-1}z) dt  \right| \\
\le & 2 \lambda^{-1}|z| \int_0^1 |\nabla u(x-t \lambda^{-1}z)|.|u(x-t \lambda^{-1}z)|  dt 
\end{align*}
where we have used $|\nabla (u^2)| \le 2|\nabla u|.|u|$ in the last estimate. Combining with Fubini's theorem and Sobolev's inequality $\|u\|_{L^6}\le C \|u\|_{H^1}$, we find that
\begin{align}
&\left| \iint |u(x)|^2  \1(|z|\le L)w(z) \Big( |u(x-\lambda^{-1}z)|^2 - |u(x)|^2 \Big) dxdz \right| \\
\le & 2\lambda^{-1}L  \int_0^1 \int |w(z)| \left( \int |u(x)|^2 |\nabla u(x-\lambda^{-1}z)|. |u(x-\lambda^{-1}z)| dx \right) dz dt \nn\\
\le &  2\lambda^{-1}L  \int_0^1 \int |w(z)| \Big(\int |u(x)|^6 dx \Big)^{1/3} \Big(\int |\nabla u(x-\lambda^{-1}z)|^2 dx \Big)^{1/2} \times \nn\\
& \quad\quad\quad\quad\quad\quad\quad\quad\quad\quad\quad\quad\quad\quad\quad\times \Big( \int |u(x-\lambda^{-1}z)|^6 dx \Big)^{1/6} dz dt  \nn\\
\le &  C \lambda^{-1}L \|u\|_{H^1}^4 \|w\|_{L^1}. \label{eq:Hartree-NLS-2}
\end{align}  
From \eqref{eq:Hartree-NLS-0}, \eqref{eq:Hartree-NLS-1} and \eqref{eq:Hartree-NLS-2}, it follows that 
\begin{align*}
& \|u\|_{H^1}^{-4} \left| \iint |u(x)|^2 \lambda^3  w(\lambda (x-y)) |u(y)|^2 dxdy - a\int |u(x)|^4dx \right|\nn\\ 
\le & C  \left( \int_{|z|>L}|w(z)| + \lambda^{-1}L \|w\|_{L^1} \right).
\end{align*} 
The conclusion follows by choosing $1\ll L\ll \lambda$ (for example $L=\sqrt{\lambda}$).
\end{proof}

\end{document}